\documentclass[preprint, amsmath, amsymb, aps, pre,nofootinbib,showkeys]{revtex4-2}

\usepackage{graphicx}
\usepackage{dcolumn}
\usepackage{bm}

\usepackage{amsmath}
\usepackage{amssymb}
\usepackage{amsthm}
\newcommand{\dd}{{\rm d}}
\newtheorem*{remark*}{\bf Remark}
\newtheorem{lemma}{Lemma}[section]
\newtheorem{theorem}[lemma]{Theorem}
\newtheorem{corollary}[lemma]{Corollary}
\newtheorem{definition}{Definition}[section]
\newcommand{\num}{\text{Num}}
\usepackage{hyperref}

\makeatletter
\def\frontmatter@preabstractspace{1cm}
\makeatother

\makeatletter
\pretocmd{\frontmatter@keys@format}
  {\vfill}{}{}
\makeatother

\usepackage{setspace}
\allowdisplaybreaks

\begin{document}

\title{Mori-Zwanzig formalism: An existence proof for weak solutions of the orthogonal dynamics equation}

\author{Christoph Widder}
\email{christoph.widder@physik.uni-freiburg.de}
\affiliation{
 Institute of Physics, Albert-Ludwigs-Universität Freiburg, Hermann-Herder-Straße 3, 79104 Freiburg im Breisgau, Germany
}




\date{\today}

\begin{abstract}
\vspace{5mm}
In classical statistical physics, the Mori-Zwanzig projection operator technique is used to derive generalized Langevin equations for a random variable (observable). Standard derivations implicitly assume the existence of solutions to the so-called orthogonal dynamics equation as well as the validity of the variation of constants formula (Dyson identity). It was pointed out by Givon, Hald and Kupferman that the existence is a subtle issue for infinite-rank projections such as Zwanzig's projection [D. Givon, O. H. Hald, R. Kupferman, \textit{Israel Journal of Mathematics}, 145 (221-241), 2005]. The authors proved the existence of weak solutions for Zwanzig's projection in the context of stationary Hamiltonian systems. To this date, this is the only existence proof that allows for an infinite-rank projection, whereas the uniqueness and regularity remain open problems. 

In this article, we generalize the existence proof by Givon et al. to nonstationary non-Hamiltonian systems whose time evolution is given by a quasicontraction semigroup. We establish growth bounds as well as the uniqueness for sufficiently regular solutions (if existent). Finally, we apply our results to Zwanzig's projection using the damped harmonic oscillator as an example. 
\end{abstract}

\keywords{Nakajima-Mori-Zwanzig projection operator formalism, generalized Langevin dynamics, orthogonal dynamics, infinite-rank projections, existence of weak solutions, abstract Cauchy problem, linear evolution equation}
\maketitle
\newpage


\section{Introduction}

The Nakajima-Mori-Zwanzig projection operator formalism originates from the study of transport phenomena and irreversibility in the late 1950s and 1960s \cite{nakajima1958, zwanzig1960, zwanzig1961, mori1965transport}. The main theme is to use projection operators in order to derive closed kinetic equations for a physical quantity of interest. Depending on the context, these projections act on density matrices, phase space distributions or dynamical variables. It was pointed out by Mori \cite{mori1965transport} that the kinetic equation for a dynamical variable resembles a non-Markovian generalization of the Langevin equation for Brownian motion. Equations of this kind are therefore referred to as generalized Langevin equations. Since the 1960s, numerous generalized Langevin equations have been derived using the projection operator technique \cite{kawasaki73, grabert1977, furukawa1979, darve2009, kawai2011, xing2011, meyer2017, vrugt2019, izvekov2019, ayaz2022, vroylandt2022, koch2024, netz2024, hery2024, izvekov2025}. Generalized Langevin equations are used to obtain coarse-grained models of complex systems that allow for non-Markovian memory effects \cite{berkowitz1983, li2015, klippenstein2021introducing}. They are used to describe the dynamics of atomic, polymeric and colloidal systems \cite{snook2006langevin}, to obtain coarse-grained molecular dynamics for crystalline solids \cite{li2010}, and to derive mode coupling approximations of the liquid-glass transition \cite{ahmadirahmat2025, gotze2009complex, hansen2013}. Furthermore, the projection operator technique is used in the context of open quantum systems \cite{breuer2002theory, seegebrecht2025concept} and dynamical density functional theory \cite{espanol2009derivation}. 

Let us outline a short derivation of the generalized Langevin equation for the case of orthogonal projections by means of semigroup theory. Many physical processes are described by dynamical systems whose state vectors constitute a real or complex Hilbert space $H$. The time evolution of state vectors is often represented by a strongly continuous semigroup, denoted by $U(t)$. The equation of motion for a state vector $z\in D(L)$ is then given by
\begin{align*}
    \frac{\dd}{\dd t} U(t)z &= U(t) L z \,  ,
\end{align*}
where $L$ with domain $D(L)$ denotes the generator of the semigroup. 

The equation of motion can now be split by means of orthogonal projections $P$ and $Q=1-P$:
\begin{align*}
    \frac{\dd}{\dd t} U(t)z &= U(t) PL z +  U(t) QL z  \, .
\end{align*}
If $PL$ is bounded, the \textit{bounded perturbation theorem} \cite[p.~158, Theorem~1.3]{engel} yields another strongly continuous semigroup $G(t)$ with generator $QL=L-PL$, the so-called \textit{orthogonal dynamics}. In particular, the semigroup $G(t)$ admits the variation of constants formula for all $x\in H$ \cite[p.~161, Corollary~1.7]{engel}
\begin{align}
    G(t)x &= U(t)x-\int^t_0 U(t-s)\widehat{PL} G(s)x \, \dd s \, , \label{equ:variation_of_constants}
\end{align}
where $\widehat{PL}$ is the continuous extension of $PL$. Using the variation of constants formula (Dyson identity) for $x=QLz$, the equation of motion takes the form of a \textit{generalized Langevin equation} \cite{widder_addendum}:
\begin{align*}
     \frac{\dd}{\dd t} U(t)z &= U(t) PL z+ G(t)QLz +  \int^t_0 U(t-s)\widehat{PL} G(s) QL z  \, \dd s \, .
\end{align*}

This derivation is applicable for finite-rank projections $P$ with range in $D(L)$. This includes, for instance, the Mori projection operator \cite{mori1965transport}
\begin{align*}
    Px &:= (z,x)(z,z)^{-1} \, , \\
    \widehat{PL}x &= (L^\dagger z, x)(z,z)^{-1} \, ,
\end{align*}
where $0\neq z\in D(L)$ and $(\cdot, \cdot)$ is the scalar product with complex conjugation in its first argument. Since $PL$ is bounded, the bounded perturbation theorem guarantees the existence of a \textit{unique mild solution} $u(t)$ of the abstract Cauchy problem (ACP)
\begin{align}
  \begin{cases}
      \frac{\dd}{\dd t} u(t) &= QL u(t) \\
    u(0) &= x 
  \end{cases}  \label{equ:ACP} 
\end{align}
for any initial value $x\in H$ \cite[p.~146, Proposition~6.4]{engel}.

For finite-rank projections, an alternative proof for the existence of unique solutions to the orthogonal dynamics equation (\ref{equ:ACP}) for stationary Hamiltonian systems is given by Givon et al. \cite[Theorem 4.1]{givon2005}. Their proof reduces the orthogonal dynamics equation to a system of linear Volterra equations, and does not rely on semigroup theory. The significance of the bounded perturbation theorem has previously been pointed out by Zhu et al. \cite{zhu18}.  Textbook derivations of the generalized Langevin equations (which implicitly assume the well-posedness of the orthogonal dynamics equation) can be found e.g. in \cite{grabert2006projection, zwanzig2001nonequilibrium, schilling2022coarse}.

Contrary to finite-rank projections, infinite-rank projections (such as Zwanzig's or Chorin's projection \cite{zwanzig1961,zwanzig2001nonequilibrium, chorin2000optimal}) typically lead to perturbations of the form $QL=L-PL$, where $PL$ is \textit{unbounded} \cite[Sec.~IV.A.]{widder2026}. In this case, the existence of unique solutions of the ACP (\ref{equ:ACP}) remains largely unsolved to this date. This means that the variation of constants formula (commonly referred to as Dyson identity or Dyson decomposition \cite{holian85}) turns into an \textit{abstract Volterra equation} for the orthogonal dynamics, cf. \cite{arendt1987,miller1975,grimmer1985}. Considerable progress has been made by Givon et al. who proved the existence of at least one weak solution of the  ACP (\ref{equ:ACP}) for stationary Hamiltonian systems and Zwanzig's projection \cite[Theorem 5.11]{givon2005}. Their proof originates from Friedrichs' theory of symmetric hyperbolic systems \cite{friedrichs1954}. The existence of weak solutions is deduced from an energy estimate obtained from Gronwall's inequality. In order to establish the energy estimate, the authors make use of the fact that for stationary Hamiltonian systems the operator $QL$ is skew-symmetric on the range of $Q$. 

We note that it suffices to assume that the adjoint of $QL$ has a finite logarithmic norm (spectral abscissa), i.e.
\begin{equation}
    \omega := \sup\{ \Re(z) | z \in \num([QL]^\dagger) \} < \infty \, ,
\end{equation}
where $\dagger$ denotes the adjoint and $\num$ denotes the numerical range. 
In particular, this allows us to prove the existence of weak solutions for autonomous systems whose time evolution is represented by a strongly continuous quasicontraction semigroup, provided the operators $LQ$ and $L^\dagger Q$ are densely defined. This includes a large class of nonstationary non-Hamiltonian systems. The line of argument presented follows closely the work by Givon et al. \cite{givon2005}, but we start from a rather abstract point of view.  

This article is structured as follows. In sec.~\ref{sec:weak_solutions}, we introduce the notion of weak solutions for the ACP associated to a linear operator $A$ in analogy to Friedrichs' extension. In sec.~\ref{sec:existence}, we establish the energy estimate under the assumption that the adjoint of $A$ has a finite logarithmic norm. Given the energy estimate, the existence of weak solutions (Theorem \ref{theorem:existence}) follows from the Riesz representation theorem. In sec.~\ref{sec:uniqueness}, similarly to the energy estimate, the a priori growth bound for sufficiently regular solutions (if existent) is obtained under the assumption that $A$ has a finite logarithmic norm. An immediate consequence is the uniqueness of sufficiently regular solutions (Corollary \ref{corollary:uniqueness}). Within the scope of this work, we cannot provide a proof for the existence of \textit{unique} solutions. The gap between the existence and uniqueness results obtained are discussed in sec.~\ref{sec:outlook}. In sec.~\ref{sec:application}, we apply our results to the projection operator formalism, i.e. we consider the ACP associated to $A=\overline{QL}\big\rvert_X$, where $\overline{QL}\big\rvert_X$ denotes the part of the closure of $QL$ in the range of $Q$. This is possible if $U(t)$ is a strongly continuous quasicontraction semigroup and if $LQ$ as well as $L^\dagger Q$ are densely defined. In sec.~\ref{sec:zwanzig}, we show that for Zwanzig's projection $P$ there exists a dense set of sufficiently smooth test functions in the range of $Q=1-P$. This shows that the assumptions for the existence and uniqueness results are fulfilled for a large class of nonstationary non-Hamiltonian systems. We demonstrate this for the damped harmonic oscillator using the bump function as initial distribution in sec.~\ref{sec:example}. Final remarks and discussions are given in sec.~\ref{sec:conclusion}.

\section{Weak solutions} \label{sec:weak_solutions}
Let $X$ be a complex Hilbert space with norm $\|x\|:=\sqrt{(x,y)}$, where $(x,y)$ denotes the scalar product with complex conjugation in its first argument. Let $A$ be a densely defined and closable linear operator on $X$. This assures that the adjoint of $A$ exists and is densely defined. 

In the following, we make use of functions on the interval $[0,T]$ with values in $X$. The required measure theory and calculus can be found in \cite[Appendix E.5. and Section 5.9.2.]{evans}. We consider an ACP of the form
\begin{align}
\begin{cases}
        u'(t) &= A u(t)+f(t)  \\
    u(0) &= g 
\end{cases}\quad , \label{equ:ACP2}
\end{align}
where $f \in L^2(0,T;X)$ and $g\in X$. Further, we define the operators $E$ with domain $D(E)$ and $E^+$ with domain $D(E^+)$ according to
\begin{align*}
    Eu(t)&:= u'(t) -Au(t) \, , \\
    D(E) &:= \{ u\in H^1(0,T;X) : Au\in L^2(0,T;X)\} \,, \\
    E^+ v(t) &:= -v'(t)-A^\dagger v(t) \, , \\
    D(E^+) &:= \{ u\in H^1(0,T;X) : A^\dagger u\in L^2(0,T;X)\} \, ,
\end{align*}
where $'$ denotes the weak derivative. With this, the Cauchy problem reads
\begin{align*}
    \begin{cases}
        Eu &= f \\
    u(0) &= g
    \end{cases} \quad .
\end{align*}

In order to define weak solutions, we require the following adjointness formula, cf. \cite[Lemma 5.3]{givon2005}. 

\begin{lemma}
    For all $u \in D(E)$ and $v\in D(E^+)$, we have
\begin{align}
   (Eu,v)_{L^2(0,T;X)} -(u,E^+ v)_{L^2(0,T;X)} &= (u(T),v(T))-(u(0),v(0)) \label{eq:adjointness} \, .
\end{align}
\end{lemma}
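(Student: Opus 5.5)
Expanding both inner products on $L^2(0,T;X)$ gives
\begin{align*}
(Eu,v)_{L^2(0,T;X)}-(u,E^+v)_{L^2(0,T;X)} = \int_0^T \big[(u'(t),v(t)) + (u(t),v'(t))\big]\,\dd t - \int_0^T \big[(Au(t),v(t)) - (u(t),A^\dagger v(t))\big]\,\dd t .
\end{align*}
Here we used that the scalar product is conjugate linear in the first argument, so the sign of $-v'-A^\dagger v$ is simply carried through. The plan is to show that the second integral vanishes pointwise and that the first integral equals the boundary term.

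\emph{The operator terms.} Since $u\in D(E)$, we have $Au\in L^2(0,T;X)$, which means $u(t)\in D(A)$ for almost every $t$. Likewise $v(t)\in D(A^\dagger)$ for a.e. $t$. By the definition of the adjoint, $(Au(t),v(t))=(u(t),A^\dagger v(t))$ for a.e. $t$. Hence the second integrand vanishes almost everywhere, and both integrands are integrable because all factors lie in $L^2(0,T;X)$.

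\emph{The derivative terms (main step).} We need a product rule for $H^1(0,T;X)$-functions: for $u,v\in H^1(0,T;X)$, the scalar function $\phi(t):=(u(t),v(t))$ is absolutely continuous and satisfies $\phi'=(u',v)+(u,v')$ almost everywhere. Once this is established, the fundamental theorem of calculus gives $\int_0^T\phi'\,\dd t=(u(T),v(T))-(u(0),v(0))$, which is the claimed identity. The point values are meaningful because $H^1(0,T;X)\subset C([0,T];X)$ by \cite[Section 5.9.2]{evans}.

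To prove the product rule, approximate $u$ and $v$ in $H^1(0,T;X)$ by smooth functions $u_n,v_n\in C^1([0,T];X)$. One can mollify after extending to a slightly larger interval, as in \cite[Section 5.9.2]{evans}, where it is shown that $H^1$ functions equal the integral of their weak derivative. For the smooth approximants, the classical product rule and the fundamental theorem hold, giving
\begin{align*}
(u_n(T),v_n(T))-(u_n(0),v_n(0))=\int_0^T \big[(u_n',v_n)+(u_n,v_n')\big]\,\dd t .
\end{align*}
Convergence in $H^1$ gives uniform convergence in $C([0,T];X)$ via the embedding estimate $\max_t\|w(t)\|\le C\|w\|_{H^1(0,T;X)}$. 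Together with $L^2$ convergence of the derivatives and Cauchy–Schwarz, we can pass to the limit on both sides.

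Alternatively, one can use directly $u(t)=u(0)+\int_0^t u'(s)\,\dd s$ and the analogous formula for $v$, then apply Fubini's theorem to the double integral. Either route gives the identity. The only delicate point is this vector-valued product rule; the operator part is immediate from the definition of the adjoint.
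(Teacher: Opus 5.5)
Your proposal is correct and follows the paper's proof: the operator terms cancel pointwise by the definition of $A^\dagger$, and the derivative terms reduce to the integration-by-parts formula for $H^1(0,T;X)$ functions, which the paper states in its appendix on integration by parts. The only difference is in justifying the vector-valued product rule: the paper uses absolute continuity and almost-everywhere strong differentiability, while you use $C^1$ approximation or Fubini applied to $u(t)=u(0)+\int_0^t u'(s)\,\dd s$, and either is valid.
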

\begin{proof}
    Integration by parts (appendix \ref{app:integration_by_parts}) yields
    \begin{align*}
    (Eu,v)_{L^2(0,T;X)} &= \int^T_0 (Eu(t),v(t)) \, \dd t \\
    &= \int^T_0 (u'(t),v(t))-(Au(t),v(t)) \,\dd t \\
    &= (u(T),v(T))-(u(0),v(0))- \int^T_0 (u(t),v'(t))+(u(t),A^\dagger v(t))\, \dd t \\
    &= (u(T),v(T))-(u(0),v(0)) + (u,E^+ v)_{L^2(0,T;X)} \, .
    \end{align*}
\end{proof}

This adjointness formula gives rise to a natural notion of weak solutions, provided $D(E^+) \cap H^1_0(0,T;X)$ is dense in $L^2(0,T;X)$. 

\begin{lemma}
The spaces $D(E)\cap H^1_0(0,T;X)$ and $D(E^+) \cap H^1_0(0,T;X)$ are dense in $L^2(0,T;X)$. 
\end{lemma}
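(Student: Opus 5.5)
The plan is to show that a simple, explicitly described subset already lies in $D(E)\cap H^1_0(0,T;X)$ and is dense in $L^2(0,T;X)$. The candidate is the set of finite sums of separated products
\begin{align*}
    u(t) = \sum_{k=1}^N \varphi_k(t)\, x_k \, , \qquad \varphi_k\in C_c^\infty(0,T)\, ,\; x_k\in D(A)\, .
\end{align*}
The argument for $D(E^+)\cap H^1_0(0,T;X)$ is identical, with $D(A)$ replaced by $D(A^\dagger)$. The latter is dense because $A$ is closable, as remarked at the beginning of this section.

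First, membership. Each such $u$ is smooth in $t$ with compact support in $(0,T)$. Its weak derivative is $u'=\sum_k\varphi_k' x_k\in L^2(0,T;X)$, so $u\in H^1_0(0,T;X)$. Moreover $Au(t)=\sum_k\varphi_k(t)Ax_k$ is a finite sum of continuous scalar functions times fixed vectors, hence lies in $L^2(0,T;X)$. Thus $u\in D(E)$.

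Second, density. I would argue by orthogonal complement. Let $w\in L^2(0,T;X)$ be orthogonal to all such $u$. Then for every $x\in D(A)$ and $\varphi\in C_c^\infty(0,T)$,
\begin{align*}
    \int_0^T \overline{\varphi(t)}\,(w(t),x)\,\dd t = 0 \, .
\end{align*}
The scalar function $t\mapsto (w(t),x)$ is in $L^2(0,T)\subset L^1_{\rm loc}$. By the fundamental lemma of the calculus of variations it vanishes for a.e.\ $t$, with an exceptional null set $N_x$ that depends on $x$.

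The main obstacle is removing this dependence on $x$. For that I need a countable subset $\{x_n\}\subset D(A)$ that is dense in $X$. I cannot assume $X$ is separable. Instead I would use that a Bochner-measurable $w$ is essentially separably valued: there is a closed separable subspace $X_0$ with $w(t)\in X_0$ for a.e.\ $t$. Let $P_0$ denote the orthogonal projection onto $X_0$. Then $P_0D(A)$ is dense in $X_0$, and since $X_0$ is separable, there is a countable family $\{x_n\}\subset D(A)$ whose projections $P_0x_n$ are dense in $X_0$. Outside the null set $\bigcup_n N_{x_n}$, together with the null set where $w(t)\notin X_0$, we get
\begin{align*}
    (w(t),P_0x_n)=(w(t),x_n)=0 \quad\text{for all } n\, .
\end{align*}
Hence $w(t)=0$ a.e., so $w=0$ in $L^2(0,T;X)$.

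An alternative route to the same conclusion avoids the complement argument. Simple functions $\sum_k \chi_{I_k}y_k$, with intervals $I_k$ and $y_k\in X$, are dense in $L^2(0,T;X)$. One approximates each $y_k$ by an element of $D(A)$ and each $\chi_{I_k}$ in $L^2(0,T)$ by a function in $C_c^\infty(0,T)$, with an $\varepsilon/3$-type estimate. I would present whichever version fits the paper's reference \cite{evans} best, likely this second one, since it uses only the standard density of simple functions.
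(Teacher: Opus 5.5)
Your proof is correct, and your primary argument takes a different route from the paper's. The paper argues constructively in three steps:
\begin{enumerate}
\item simple functions $\sum_i \chi_{E_i}x_i$ are dense in $L^2(0,T;X)$, by pointwise approximation of strongly measurable functions and dominated convergence;
\item each value $x_i$ is replaced by a nearby element of $D(A)$;
\item the result is truncated to $[\varepsilon,T-\varepsilon]$ and mollified, which reduces convergence to the scalar fact that mollifications converge in $L^2(\mathbb{R})$.
\end{enumerate}
The resulting approximants $\eta_\varepsilon\star(s\chi_{[\varepsilon,T-\varepsilon]})$ are of your form $\sum_k\varphi_k x_k$, with $\varphi_k$ smooth and compactly supported and $x_k\in D(A)$. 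So both proofs show density of the same class and differ only in how density is verified.

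Your second, ``alternative'' route is essentially the paper's proof. There you may take arbitrary measurable sets instead of intervals $I_k$, since $C_c^\infty(0,T)$ is dense in $L^2(0,T)$; this spares you justifying density of step functions.

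Your main route uses the Hilbert structure of $L^2(0,T;X)$ instead. Density of the subspace is equivalent to a trivial orthogonal complement. The fundamental lemma then gives $(w(t),x)=0$ a.e.\ for each fixed $x$. Pettis' essential separability of $w$, combined with the projection $P_0$, lets you extract a countable family from $D(A)$, and you handle the nonseparable case correctly.

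Your argument buys brevity and a clean reduction to a scalar statement. Its cost is an explicit appeal to Pettis' theorem and to Hilbert-space duality. The paper's argument, where separability is hidden inside the density of simple functions, produces an explicit approximating sequence. It would also carry over verbatim to $L^p(0,T;X)$, $1\le p<\infty$, for a Banach space $X$.
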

\begin{proof}
    We only proof the first statement. The proof for the second statement is identical.
    
    First, we show that the set of simple functions is dense in the Bochner space $L^2(0,T;X)$. A function $s$ is called simple, if
    \begin{align*}
        s &= \sum^n_{i=1} \chi_{E_i}x_i \, ,
    \end{align*}
    where $x_i \in X$, $\chi$ is the characteristic function and the sets $E_i \subseteq [0,T]$ are measurable and pairwise disjoint. For any strongly measurable function $f:[0,T]\to X$, there exists a sequence of simple functions $s_k$ such that $s_k(t) \to f(t)$ and $\|s_k(t)\|\leq \|f(t)\|$ for a.e. $t\in [0,T]$ \cite[p.~398, E.2]{cohn}. Hence, for a.e. $t\in [0,T]$, we have
    \begin{align*}
        \|s_k(t)-f(t)\|^2 &\to 0  \, , \\
        \|s_k(t)-f(t)\|^2 &\leq (\|s_k(t)\|+\|f(t)\|)^2 \leq 4 \|f(t)\|^2 \, .
    \end{align*}
    By the dominated convergence theorem, we have
    \begin{align*}
       \lim_{k\to\infty}\int^T_0 \|s_k(t)-f(t)\|^2 \, \dd t &=   \int^T_0\lim_{k\to\infty} \|s_k(t)-f(t)\|^2 \, \dd t = 0 \, .
    \end{align*}
    This shows that the space of simple functions is dense in $L^2(0,T;X)$. 
    
    Let $s$ be a simple function. Since $D(A)$ is dense in $X$, for each $x_i$ there exists a sequence $D(A)\ni x_{ik}\to x_i \in X$. Hence, the sequence
    \begin{align*}
        s_k &:= \sum^n_{i=1} \chi_{E_i}x_{ik} 
    \end{align*}
    converges to $s$ in $L^2(0,T;X)$. This shows that the set of simple functions with values in $D(A)$ is dense in the set of simple functions. Hence, the set of simple functions with values in $D(A)$ is dense in $L^2(0,T;X)$. 

    The last step of the proof is accomplished by a simple mollification. Mollifications of vector-valued functions are also used, for instance, in the proof of the fundamental theorem of Bochner integral calculus \cite[p.~285-286, Theorem 2]{evans}. Let $s$ be a simple function with values in $D(A)$. We truncate and mollify this function:
    \begin{align*}
        s_\varepsilon := \eta_\varepsilon \star (s\cdot \chi_{[\varepsilon,T-\varepsilon]}) \, ,
    \end{align*}
    where $\eta_\varepsilon$ is a mollifier with support $[-\varepsilon,\varepsilon]$ and $\star$ is the convolution. Then $s_\varepsilon\in D(E) \cap H^1_0(0,T;X)$ and $s_\varepsilon \to s$ in $L^2(0,T;X)$. To see this, recall that $s$ has only a finite number of values. Hence, the convergence is easily inferred from the fact that the mollification of a real-valued function in $L^2(\mathbb{R})$ converges in $L^2(\mathbb{R})$ \cite[p.~109, Theorem 4.22]{Brezis2011}. Since the set of simple functions with values in $D(A)$ is dense in $L^2(0,T;X)$, this shows that $D(E) \cap H^1_0(0,T;X)$ is dense in $L^2(0,T;X)$. 
\end{proof}

The following lemma allows us to define weak solutions in the spirit of Friedrichs' extension for symmetric hyperbolic systems \cite{friedrichs1954}, cf. \cite[Lemmata 5.5/5.6, Corollary 5.7]{givon2005}.

\begin{lemma}\label{def:weak_extension}
    For any $u\in L^2(0,T;X)$, there exists at most one pair $f\in L^2(0,T;X)$ and $g\in X$ such that 
    \begin{align}
        (f,v)_{L^2(0,T;X)} -(u,E^+ v)_{L^2(0,T;X)} + (g,v(0))&= 0 \label{eq:weak_extension}
    \end{align}
    for all $v \in D(E^+)$ with $v(T)=0$. If $f$ and $g$ exist, $u$ is called a \textbf{weak solution}, and we define the \textbf{weak extensions} $\tilde{E},\tilde{S}$ of the linear maps $u\to Eu$ and $u\to u(0)$  according to 
    \begin{align*}
        \tilde{E}u &:= f \, , \\
        \tilde{S}u &:= g \, .
    \end{align*}
\end{lemma}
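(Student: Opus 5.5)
Uniqueness of the pair $(f,g)$ follows by linearity and a two-stage testing argument. I will take two pairs $(f_1,g_1)$ and $(f_2,g_2)$ that both satisfy (\ref{eq:weak_extension}) for the same $u$, and subtract the two identities. The term $(u,E^+v)_{L^2(0,T;X)}$ cancels, leaving
\begin{align*}
    (f_1-f_2,v)_{L^2(0,T;X)} + (g_1-g_2,v(0)) &= 0
\end{align*}
for all $v\in D(E^+)$ with $v(T)=0$. Writing $\delta f := f_1-f_2$ and $\delta g := g_1-g_2$, the task is to show $\delta f=0$ and $\delta g=0$.

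\emph{Step 1: $\delta f=0$.} I restrict to test functions $v\in D(E^+)\cap H^1_0(0,T;X)$. For these $v(0)=0$ and $v(T)=0$, so the boundary term drops out and $(\delta f,v)_{L^2(0,T;X)}=0$. By the preceding lemma, $D(E^+)\cap H^1_0(0,T;X)$ is dense in $L^2(0,T;X)$. Since $\delta f\in L^2(0,T;X)$, I pick a sequence $v_k$ in this set with $v_k\to\delta f$ in $L^2(0,T;X)$. Then $\|\delta f\|^2_{L^2(0,T;X)}=\lim_k(\delta f,v_k)_{L^2(0,T;X)}=0$, hence $f_1=f_2$.

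\emph{Step 2: $\delta g=0$.} The identity now reduces to $(\delta g,v(0))=0$ for all admissible $v$. It therefore suffices to produce, for each $y$ in a dense subset of $X$, some $v\in D(E^+)$ with $v(T)=0$ and $v(0)=y$. I take $y\in D(A^\dagger)$, which is dense because $A$ is closable. Let $\phi\in C^\infty([0,T])$ be a scalar function with $\phi(0)=1$ and $\phi(T)=0$, and set $v(t):=\phi(t)y$. Then $v\in H^1(0,T;X)$, and $A^\dagger v(t)=\phi(t)A^\dagger y$ lies in $L^2(0,T;X)$, so $v\in D(E^+)$ with $v(T)=0$ and $v(0)=y$. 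This gives $(\delta g,y)=0$ on a dense set, and continuity of the scalar product yields $\delta g=0$.

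The only genuinely nontrivial ingredient is the density used in Step 1, which the preceding lemma supplies. Step 2 needs only a check that the scalar multiple $\phi y$ lies in $D(E^+)$, and this is immediate since $\phi$ is scalar and $y\in D(A^\dagger)$. Once uniqueness is established, $\tilde{E}$ and $\tilde{S}$ are well defined, and they are linear because (\ref{eq:weak_extension}) is linear in $(u,f,g)$.
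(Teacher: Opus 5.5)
Your proof is correct and follows the paper's argument. You subtract the two identities, use density of $D(E^+)\cap H^1_0(0,T;X)$ to get $\delta f=0$, and then use density of $D(A^\dagger)$ to get $\delta g=0$. Your explicit test function $v(t)=\phi(t)y$ supplies a step the paper leaves implicit, and subtracting first instead of treating $u=0$ separately is only cosmetic.

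The paper also notes, via the adjointness formula (\ref{eq:adjointness}) with $v(T)=0$, that every $u\in D(E)$ is a weak solution with $\tilde{E}u=Eu$ and $\tilde{S}u=u(0)$. This is what justifies calling $\tilde{E},\tilde{S}$ extensions, so you may want to add that one line.
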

\begin{proof}
    Let $u=0$. Suppose that eq.~(\ref{eq:weak_extension}) holds for all $v\in D(E^+) \cap H^1_0(0,T;X)$. Then,
    \begin{align*}
        (f,v)_{L^2(0,T;X)} &= 0
    \end{align*}
    for all $v\in D(E^+) \cap H^1_0(0,T;X)$. Since $D(E^+) \cap H^1_0(0,T;X)$ is dense in $L^2(0,T;X)$, this implies $f=0$. Now suppose that eq.~(\ref{eq:weak_extension}) holds for all $v\in D(E^+)$ with $v(T)=0$. Then we still have $f=0$. Hence,
    $(g,v(0))= 0$ for all $v\in D(E^+)$ with $v(T)=0$. Since $D(A^\dagger)$ is dense in $X$, this implies $g=0$.

    Now let $u\in L^2(0,T;X)$ and suppose there exist functions $f_1,g_1$ and $f_2,g_2$ such that eq.~(\ref{eq:weak_extension}) holds for all $v\in D(E^+)$ with $v(T)=0$. Then
    \begin{align*}
        (f_1-f_2,v)_{L^2(0,T;X)} -(0,E^+ v)_{L^2(0,T;X)} + (g_1-g_2,v(0))&= 0
    \end{align*}
    for all $v\in D(E^+)$ with $v(T)=0$. Hence, eq.~(\ref{eq:weak_extension}) holds for $u=0$, $f=f_1-f_2$, $g=g_1-g_2$ and for all $v\in D(E^+)$ with $v(T)=0$. By the previous argument, this shows that $f=f_1-f_2=0$ and $g=g_1-g_2=0$. Hence, there exists at most one pair $f$ and $g$ such that eq.~(\ref{eq:weak_extension}) holds for all $v\in D(E^+)$ with $v(T)=0$. 

    Note that the linear operators $\tilde{E}$ and $\tilde{S}$ are extensions in the sense that $\tilde{E}u = Eu$ and $\tilde{S}u=u(0)$ for all $u \in D(E)$. This is an immediate consequence of the adjointness formula (\ref{eq:adjointness}) and the definitions for $\tilde{E}$ and $\tilde{S}$. 
\end{proof}

\subsection{Existence of weak solutions}\label{sec:existence}

In order to address the existence of weak solutions, we derive a so-called energy estimate, cf. \cite[Lemma 5.10]{givon2005}. To this end, we assume that the numerical range of $A^\dagger$ is contained in a left half-plane. 

\begin{definition}
    Let $\omega$ be defined by
    \begin{align*}
        \omega &:= \sup \left\{ \frac{\Re\left((x,A^\dagger x)\right)}{(x,x)} \Big\rvert 0\neq x\in D(A^\dagger) \right\} \, . 
    \end{align*}
\end{definition}

\begin{remark*}
    The numerical range is closely linked to the growth bounds for linear evolution equations: Suppose $A^\dagger$ generates a strongly continuous semigroup $T(t)$. Then \cite[Lemma 2.2]{davies}
    \begin{equation*}
        \omega= \min\{\lambda: \|T(t)\| \leq e^{\lambda t} \text{ for all } t\geq 0\} \, .
    \end{equation*}
    The number $\omega$ is sometimes called (upper) logarithmic norm or (upper) spectral abscissa of $A^\dagger$. 
\end{remark*}

\begin{lemma}[Energy estimate]
    Let $\omega < \infty$. There exists a number $\gamma$ such that for all $v\in D(E^+)$
    \begin{align}
        \|v\|^2_{L^2(0,T;X)} + \|v(0)\|^2 &\leq \gamma^2 \left(\|v(T)\|^2 + \|E^+ v\|^2_{L^2(0,T;X)}\right) \, . \label{eq:energy_estimate}
    \end{align}
\end{lemma}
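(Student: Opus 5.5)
The estimate will come from a Gronwall-type argument applied backwards in time to the scalar function $\phi(t):=\|v(t)\|^2$, with $v\in D(E^+)$. Since $v\in H^1(0,T;X)$, $\phi$ is absolutely continuous. By the product rule for Bochner--Sobolev functions (the same integration by parts used in appendix~\ref{app:integration_by_parts}), its derivative is
\begin{align*}
    \phi'(t) = 2\Re\,(v(t),v'(t)) \quad \text{for a.e. } t\in[0,T].
\end{align*}
For a.e.\ $t$ we have $v(t)\in D(A^\dagger)$, because $A^\dagger v\in L^2(0,T;X)$. We can therefore substitute $v'(t) = -E^+v(t) - A^\dagger v(t)$ and obtain
\begin{align*}
    -\phi'(t) = 2\Re\,(v(t),A^\dagger v(t)) + 2\Re\,(v(t),E^+v(t)) \leq 2\omega\,\phi(t) + \phi(t) + \|E^+v(t)\|^2 ,
\end{align*}
using the definition of $\omega$ and Young's inequality $2|(a,b)|\leq\|a\|^2+\|b\|^2$.

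Next, set $\psi(s):=\phi(T-s)$, which turns this into a forward differential inequality
\begin{align*}
    \psi'(s)\leq \beta\,\psi(s)+h(s), \qquad \beta:=2\omega+1, \qquad h(s):=\|E^+v(T-s)\|^2 .
\end{align*}
Multiplying by $e^{-\beta s}$ and integrating (Gronwall in integral form, legitimate since $\psi$ is absolutely continuous) gives, for all $t\in[0,T]$,
\begin{align*}
    \|v(t)\|^2 \leq e^{|\beta| T}\left(\|v(T)\|^2+\|E^+v\|^2_{L^2(0,T;X)}\right).
\end{align*}
Here $|\beta|$ is used so that the bound also holds when $\beta<0$. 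Evaluating at $t=0$ bounds $\|v(0)\|^2$, and integrating over $[0,T]$ bounds $\|v\|^2_{L^2(0,T;X)}$ by $T$ times the same quantity. Adding the two gives the energy estimate~(\ref{eq:energy_estimate}) with $\gamma^2=(1+T)e^{|2\omega+1|T}$.

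The main obstacle is the rigorous justification of the identity for $\phi'$ and of the pointwise substitution $v'=-E^+v-A^\dagger v$ almost everywhere. I would handle this by relying on the integration-by-parts formula of the appendix, applied on subintervals $[t,T]$, which yields
\begin{align*}
    \|v(T)\|^2-\|v(t)\|^2=\int_t^T 2\Re\,(v,v')\,\dd s .
\end{align*}
This lets me carry out the Gronwall argument entirely in integral form, so that no pointwise differentiability of $\phi$ is needed. The membership $v(s)\in D(A^\dagger)$ for a.e.\ $s$ is implicit in the requirement $A^\dagger v\in L^2(0,T;X)$ in the definition of $D(E^+)$.
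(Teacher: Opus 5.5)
Your proposal is correct and follows the paper's proof essentially step for step. Both differentiate $\|v(T-t)\|^2$, substitute $v'=-E^+v-A^\dagger v$, bound the terms using $\omega$ and Young's inequality, apply Gronwall backwards from $t=T$, and then evaluate at $t=0$ and integrate over $[0,T]$. The only real difference is the constant: yours is the cruder $\gamma^2=(1+T)e^{|2\omega+1|T}$, against the paper's $\frac{e^{(1+2\omega)T}-1}{1+2\omega}+e^{(1+2\omega)T}$, and your use of $|2\omega+1|$ is the more careful choice, because the paper's intermediate bound $\|v(T-t)\|^2\le e^{(1+2\omega)t}(\cdots)$ is not justified (and can fail) when $1+2\omega<0$, while its constant is undefined at $1+2\omega=0$.
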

\begin{proof}
    Since $v$ is weakly differentiable, $v$ is absolutely continuous. Hence, the map $t\to (v(t),v(t))$ is absolutely continuous. Thus, $t\to (v(t),v(t))$ is weakly differentiable and its weak derivative is given by the pointwise derivative almost everywhere. Applying the product rule yields for a.e. $t\in[0,T]$
    \begin{align*}
        \frac{\dd }{\dd t} \|v(T-t)\|^2 &= -2 \Re \left((v'(T-t),v(T-t))\right) \\
        &= 2 \Re \left((E^+ v(T-t),v(T-t))\right) +2 \Re \left((A^\dagger v(T-t),v(T-t))\right) \\
        &\leq 2\|E^+ v(T-t)\|\|v(T-t)\|+2\omega \|v(T-t)\|^2 \\
        &\leq \|E^+ v(T-t)\|^2+\|v(T-t)\|^2+2\omega \|v(T-t)\|^2 \\
        &= \|E^+ v(T-t)\|^2 + (1+2\omega)\|v(T-t)\|^2 \, .
    \end{align*}
    Applying Gronwall's inequality \cite[p.~664, Appendix B.2.j]{evans} yields for all $t\in [0,T]$
    \begin{align}
        \|v(T-t)\|^2 & \leq e^{(1+2\omega)t}\left(\|v(T)\|^2+\|E^+ v\|^2_{L^2(0,T;X)}\right) \, . \label{eq:gronwall}
    \end{align}
    Integrating over $t\in [0,T]$ yields
    \begin{align*}
        \|v\|^2_{L^2(0,T;X)} &\leq \frac{e^{(1+2\omega)T}-1}{1+2\omega} \left(\|v(T)\|^2+\|E^+ v\|^2_{L^2(0,T;X)}\right) \, .
    \end{align*}
    Using once more eq.~(\ref{eq:gronwall}) for $t=T$, we find
    \begin{align*}
        \|v\|^2_{L^2(0,T;X)} + \|v(0)\|^2&\leq \underbrace{\left( \frac{e^{(1+2\omega)T}-1}{1+2\omega}+e^{(1+2\omega)T}\right)}_{=:\gamma^2} \left(\|v(T)\|^2+\|E^+ v\|^2_{L^2(0,T;X)}\right) \, .
    \end{align*}
\end{proof}

The following theorem generalizes the existence proof of Givon et al. \cite[Theorem 5.11]{givon2005} in a straightforward manner. The structure of the proof is very similar, but we allow for operators $A$ that are not skew-symmetric. It suffices that the logarithmic norm $\omega$ of $A^\dagger$ is finite. 

\begin{theorem}[Existence of weak solutions]\label{theorem:existence}
    Let $\omega<\infty$, $f\in L^2(0,T;X)$ and $g\in X$. Then there exists a weak solution $u \in L^2(0,T;X)$, i.e. $\tilde{E}u=f$ and $\tilde{S}u=g$. 
\end{theorem}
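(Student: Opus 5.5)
The plan is the classical duality argument of Friedrichs and Lax--Milgram type: define a linear functional on the range of $E^+$ restricted to test functions vanishing at $T$, bound it with the energy estimate (\ref{eq:energy_estimate}), and obtain $u$ from the Riesz representation theorem.

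\textbf{Step 1: the test space and the functional.} Let $V:=\{v\in D(E^+): v(T)=0\}$. On $V$ the energy estimate reduces to
\begin{align*}
\|v\|_{L^2(0,T;X)}+\|v(0)\|\le \sqrt{2}\,\gamma\,\|E^+v\|_{L^2(0,T;X)} .
\end{align*}
In particular, $E^+$ is injective on $V$. On the subspace $W:=E^+V\subseteq L^2(0,T;X)$ I define
\begin{align*}
\ell(E^+v):=(v,f)_{L^2(0,T;X)}+(v(0),g), \qquad v\in V .
\end{align*}
Injectivity of $E^+$ on $V$ makes $\ell$ well defined. The functional $\ell$ is conjugate-linear, consistent with the convention that the scalar product is conjugate-linear in its first argument. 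By Cauchy--Schwarz and the estimate above,
\begin{align*}
|\ell(w)|\le \sqrt{2}\,\gamma\,\bigl(\|f\|_{L^2(0,T;X)}+\|g\|\bigr)\,\|w\|_{L^2(0,T;X)} .
\end{align*}
So $\ell$ is bounded on $W$.

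\textbf{Step 2: Riesz representation.} I extend $\ell$ by continuity to $\overline{W}$ and by zero on $W^\perp$; this is the Hahn--Banach step, done concretely via orthogonal projection. Riesz representation then gives $u\in L^2(0,T;X)$ such that $\ell(w)=(w,u)_{L^2(0,T;X)}$ for all $w\in W$. This means
\begin{align*}
(E^+v,u)_{L^2(0,T;X)}=(v,f)_{L^2(0,T;X)}+(v(0),g)\qquad\text{for all } v\in V .
\end{align*}

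\textbf{Step 3: identification as a weak solution.} Taking complex conjugates turns the identity of Step 2 into
\begin{align*}
(u,E^+v)=(f,v)+(g,v(0)) .
\end{align*}
This is exactly (\ref{eq:weak_extension}) up to the sign convention for the $g$ term. I will align the sign with Lemma~\ref{def:weak_extension} so that consistency with the adjointness formula (\ref{eq:adjointness}) gives $\tilde{S}u=u(0)=g$ for smooth $u$. Concretely, for $u\in D(E)$, (\ref{eq:adjointness}) with $v(T)=0$ yields
\begin{align*}
(Eu,v)-(u,E^+v)+(u(0),v(0))=0 ,
\end{align*}
so the sign of the $g$ term in the definition of $\ell$ is chosen to match this. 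Lemma~\ref{def:weak_extension} then gives uniqueness of the pair, so $\tilde{E}u=f$ and $\tilde{S}u=g$.

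\textbf{Main obstacle.} The only delicate points are the bookkeeping of conjugations and signs, and the well-definedness of $\ell$. The latter is precisely where the energy estimate, and hence $\omega<\infty$, enters: it gives injectivity of $E^+$ on $V$ together with the control of $\|v\|$ and $\|v(0)\|$ by $\|E^+v\|$. Density of $W$ is not needed, because Riesz representation is applied on the closed subspace $\overline{W}$.
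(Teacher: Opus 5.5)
Your proposal is correct and is essentially the paper's argument: the paper applies Riesz in the completion $\mathcal{H}$ of $\{v\in D(E^+): v(T)=0\}$ under $(E^+v,E^+w)_{L^2(0,T;X)}$ and recovers $u$ as the $L^2$-limit of $E^+v_n$. Since $E^+$ is an isometry from that space onto $E^+V$, this is the same as your Riesz representation on $\overline{E^+V}\subseteq L^2(0,T;X)$, and extending by zero on $(E^+V)^\perp$ even produces the same $u$. The hedge in Step 3 is unnecessary: the identity $(u,E^+v)=(f,v)+(g,v(0))$ you obtain is literally eq.~(\ref{eq:weak_extension}), with no sign to adjust.
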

\begin{proof}
    The energy estimate (\ref{eq:energy_estimate}) allows us to define the scalar product 
    \begin{align}
        (v,w)_\mathcal{H} &:= (E^+ v,E^+ w)_{L^2(0,T;X)} \label{equ:scalar_produc}
    \end{align}
    on the space of functions $v\in D(E^+)$ with $v(T)=0$. (The energy estimate is required to guarantee the positive definiteness.) The completion w.r.t. this scalar product is a Hilbert space, denoted by $\mathcal{H}$. 
    

    Next, we define the linear functional $\Phi(w)$ for all $w\in D(E^+)$ with $w(T)=0$ according to
    \begin{align*}
        \Phi(w) &:= (f,w)_{L^2(0,T;X)} +(g,w(0)) \, .
    \end{align*}
    This functional is bounded on $\mathcal{H}$:
    \begin{align*}
        |\Phi(w)| &\leq  \|f\|_{L^2(0,T;X)}\|w\|_{L^2(0,T;X)} + \|g\| \|w(0)\| \\
        & \leq \sqrt{\|f\|^2_{L^2(0,T;X)}+\|g\|^2}\sqrt{\|w\|^2_{L^2(0,T;X)}+\|w(0)\|^2} \\
        &\leq \gamma \sqrt{\|f\|^2_{L^2(0,T;X)}+\|g\|^2} \|E^+ w\|_{L^2(0,T;X)} \\
        &= \gamma \sqrt{\|f\|^2_{L^2(0,T;X)}+\|g\|^2} \|w\|_\mathcal{H} \, ,
    \end{align*}
    where we used the energy estimate (\ref{eq:energy_estimate}). Since the set of functions $w\in D(E^+)$ with $w(T)=0$ is dense in $\mathcal{H}$, the linear functional $\Phi(w)$ has a unique continuous extension on $\mathcal{H}$, which we denote again by $\Phi(w)$.
    
    By the Riesz representation theorem, there exists a unique vector $v\in\mathcal{H}$ such that
    \begin{align*}
        \Phi(w) &= (v,w)_\mathcal{H} 
    \end{align*}
    for all $w\in \mathcal{H}$. By construction of the Hilbert space $\mathcal{H}$, there exists a sequence $v_n \in D(E^+)$ with $v_n(T)=0$ such that $v_n \to v$ in $\mathcal{H}$ and $E^+ v_n$ converges to some vector $u$ in $L^2(0,T;X)$. 
    Hence, for all $w\in D(E^+)$ with $w(T)=0$, we have
    \begin{align*}
        (f,w)_{L^2(0,T;X)}+(g,w
        (0)) &= (v,w)_\mathcal{H} \\
        &= \lim_n(v_n,w)_\mathcal{H} \\
        &= \lim_n(E^+ v_n,E^+ w)_{L^2(0,T;X)} \\
        &= (u,E^+ w)_{L^2(0,T;X)} \, .
    \end{align*}
    By definition, this implies that $u$ is a weak solution, i.e. $\tilde{E}u=f$ and $\tilde{S}u=g$. 
\end{proof}

\begin{remark*}
    It is worth noting that we did not introduce the so-called strong extension of the map $w\to E^+ w$ as used in \cite{givon2005}. Let us therefore show that the construction of a weak solution as presented in the proof of theorem \ref{theorem:existence} always yields the same weak solution: Suppose there exists another sequence $\tilde{v}_n \to v$ such that $E^+ \tilde{v}_n$ converges to some vector $\tilde{u}$ in $L^2(0,T;X)$. Then, we have $(u-\tilde{u},E^+ w)_{L^2(0,T;X)}=0$ for all $w\in D(E^+)$ with $w(T)=0$. Now we choose $w=v_n-\tilde{v}_n$. Taking the limit $n\to\infty$, we obtain $\|u-\tilde{u}\|_{L^2(0,T;X)}=0$. This does not, however, imply the uniqueness of a weak solution. 
\end{remark*}

\subsection{Growth bounds and uniqueness}\label{sec:uniqueness}

In this section, we show that if the numerical range of $A$ is contained in a left half-plane, there exists at most one weak solution in $D(E)$. This follows from an a priori growth bound, which is analogous to the energy estimate (\ref{eq:energy_estimate}). We emphasise, however, that the existence of weak solutions in $D(E)$ (or the uniqueness of weak solutions in $L^2(0,T;X)$) remains an open problem, c.f. sec.~\ref{sec:outlook}. 

\begin{definition}
    Let $\omega_0$ be defined by
    \begin{align*}
        \omega_0 &:= \sup \left\{ \frac{\Re\left((x,A x)\right)}{(x,x)} \Big\rvert 0\neq x\in D(A) \right\} \, . 
    \end{align*}
\end{definition}

\begin{lemma}[A priori growth bound]
    Let $\omega_0 < \infty$, $f\in L^2(0,T;X)$ and $g\in X$. Suppose there exists a weak solution $u\in D(E)$. Then
    \begin{align}
        \|u(t)\|^2 &\leq  e^{(1+2\omega_0)t}\left(\|g\|^2+\|f\|^2_{L^2(0,T;X)}\right) \, . \label{eq:growth_bound}
    \end{align}
\end{lemma}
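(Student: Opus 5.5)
The plan is to mirror the proof of the energy estimate (\ref{eq:energy_estimate}), now running forward in time and using $\omega_0$ in place of $\omega$. The first step is to identify the weak solution with a classical one. Since $u\in D(E)$, the remark after Lemma~\ref{def:weak_extension} gives $\tilde{E}u=Eu$ and $\tilde{S}u=u(0)$. Hence $Eu=f$ in $L^2(0,T;X)$ and $u(0)=g$. Concretely, $u'(t)=Au(t)+f(t)$ holds for a.e. $t\in[0,T]$, and $u(t)\in D(A)$ for a.e. $t$.

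The second step is the differential inequality. Because $u\in H^1(0,T;X)$, $u$ is absolutely continuous, so $t\mapsto\|u(t)\|^2$ is absolutely continuous. Its derivative, computed by the product rule of appendix~\ref{app:integration_by_parts}, is for a.e. $t$
\begin{align*}
\frac{\dd}{\dd t}\|u(t)\|^2 &= 2\Re\left((u(t),u'(t))\right) = 2\Re\left((u(t),Au(t))\right)+2\Re\left((u(t),f(t))\right) \\
&\leq 2\omega_0\|u(t)\|^2 + \|f(t)\|^2+\|u(t)\|^2 .
\end{align*}
Here the term with $A$ is bounded by the definition of $\omega_0$, since $u(t)\in D(A)$ a.e. The cross term is handled with Cauchy--Schwarz and Young's inequality, $2ab\le a^2+b^2$. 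The result is
\[
\eta'(t)\le(1+2\omega_0)\eta(t)+\|f(t)\|^2, \qquad \eta(t):=\|u(t)\|^2 .
\]

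The third step applies the differential form of Gronwall's inequality \cite[p.~664, Appendix B.2.j]{evans}, exactly as in (\ref{eq:gronwall}). This gives
\[
\eta(t)\le e^{(1+2\omega_0)t}\left(\eta(0)+\int_0^t\|f(s)\|^2\,\dd s\right).
\]
Substituting $\eta(0)=\|g\|^2$ and enlarging the integral to $\|f\|^2_{L^2(0,T;X)}$ yields (\ref{eq:growth_bound}).

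The only delicate point is the first step, namely justifying that the weak solution satisfies the equation pointwise a.e. with $u(0)=g$. This follows from uniqueness in Lemma~\ref{def:weak_extension} together with the adjointness formula (\ref{eq:adjointness}). In that formula, the boundary term $(u(T),v(T))$ vanishes because the test functions satisfy $v(T)=0$.

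One further caveat concerns the sign of $1+2\omega_0$. If it is negative, Gronwall's inequality still holds in the stated form, because $e^{(1+2\omega_0)(t-s)}\le e^{(1+2\omega_0)t}$ fails to be needed: we bound $\int_0^t e^{(1+2\omega_0)(t-s)}\|f(s)\|^2\,\dd s$ directly. If this bound turns out to be problematic, I would first check whether the inequality is still stated correctly in that regime. If not, I would replace $\omega_0$ by $\max(\omega_0,-\tfrac12)$, in the same way that the energy estimate implicitly does.
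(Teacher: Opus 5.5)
Your main argument is the same as the paper's. You identify $Eu=f$ and $u(0)=g$ through the extension property of $\tilde E,\tilde S$, differentiate $\|u(t)\|^2$, bound $2\Re(u(t),Au(t))$ by $2\omega_0\|u(t)\|^2$ and the forcing term by Young's inequality, and apply Gronwall. For $1+2\omega_0\ge 0$ this proves the statement.

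Your caveat paragraph, however, contains a false claim: Gronwall does not hold in the stated form when $1+2\omega_0<0$. For $c:=1+2\omega_0<0$ the integrating factor only yields
\[
\|u(t)\|^2\le e^{ct}\|g\|^2+\int_0^t e^{c(t-s)}\|f(s)\|^2\,\dd s .
\]
Since $e^{c(t-s)}\ge e^{ct}$ for $s\in[0,t]$ when $c<0$, the forcing term cannot be bounded by $e^{ct}\|f\|^2_{L^2(0,T;X)}$. The Gronwall lemma in Evans indeed requires a nonnegative coefficient.

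The stated bound is in fact false in this regime. Take $X=\mathbb{C}$, $A=-1$ (so $\omega_0=-1$), $T=2$, $g=0$ and $f=\chi_{[1,2]}$. Then $u(t)=\int_0^t e^{-(t-s)}f(s)\,\dd s$ lies in $D(E)$ with $Eu=f$ and $u(0)=0$. Yet $|u(2)|^2=(1-e^{-1})^2\approx 0.40>e^{-2}=e^{(1+2\omega_0)\cdot 2}\|f\|^2_{L^2(0,2)}$.

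The paper's proof applies the same Gronwall lemma without checking the sign, so it also tacitly needs $1+2\omega_0\ge 0$. Your fallback of replacing $\omega_0$ by $\max(\omega_0,-\tfrac12)$ is the correct repair. It is legitimate because $\Re(x,Ax)\le\omega_0\|x\|^2$ persists for any larger constant. In the energy estimate the same issue is harmless, since only the existence of some $\gamma$ is asserted. Corollary~\ref{corollary:uniqueness} is also unaffected, because there $f=0$ and $g=0$.
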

\begin{proof}
    Similar to the proof of the energy estimate (\ref{eq:energy_estimate}), we have for a.e. $t\in[0,T]$
    \begin{align*}
                \frac{\dd }{\dd t} \|u(t)\|^2 &= 2 \Re \left((u'(t),u(t))\right) \\
        &= 2 \Re \left((E u(t),u(t))\right) +2 \Re \left((A u(t),u(t))\right) \\
        &\leq 2\|E u(t)\|\|u(t)\|+2\omega_0 \|u(t)\|^2 \\
        &\leq \|E u(t)\|^2+\|u(t)\|^2+2\omega_0 \|u(t)\|^2 \\
        &= \|E u(t)\|^2 + (1+2\omega_0)\|u(t)\|^2 \, .
    \end{align*}
    Applying Gronwall's inequality \cite[p.~664, Appendix B.2.j]{evans} yields for all $t\in [0,T]$
    \begin{equation}
        \|u(t)\|^2 \leq e^{(1+2\omega_0)t}\left(\|u(0)\|^2+\|Eu\|^2_{L^2(0,T;X)}\right) \, .\label{eq:growth_bound_2}
    \end{equation}
    Since $Eu=f$ and $u(0)=g$, the assertion follows.
\end{proof}
\begin{corollary}\label{corollary:uniqueness}
Let $\omega_0 < \infty$, $f\in L^2(0,T;X)$ and $g\in X$. There exists at most one weak solution in $D(E)$.
\end{corollary}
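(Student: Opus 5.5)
The plan is to reduce uniqueness to the a priori growth bound (\ref{eq:growth_bound}) by linearity. Suppose $u_1,u_2\in D(E)$ are both weak solutions for the same data, i.e.\ $\tilde{E}u_i=f$ and $\tilde{S}u_i=g$ for $i=1,2$. Set $w:=u_1-u_2$. Since $D(E)$ is a linear space ($H^1(0,T;X)$ is linear and $Au_i\in L^2(0,T;X)$ implies $Aw\in L^2(0,T;X)$), we have $w\in D(E)$.

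Next I determine the data of $w$. The weak extensions $\tilde{E},\tilde{S}$ are linear: subtracting the two identities (\ref{eq:weak_extension}) for $u_1,u_2$ shows that $w$ satisfies (\ref{eq:weak_extension}) with $f=0$ and $g=0$ for all $v\in D(E^+)$ with $v(T)=0$. Hence $w$ is a weak solution with $\tilde{E}w=0$ and $\tilde{S}w=0$. Because $w\in D(E)$, the extension property noted at the end of the proof of Lemma \ref{def:weak_extension} gives $Ew=\tilde{E}w=0$ and $w(0)=\tilde{S}w=0$. This identification of the weak data with the classical data $Eu$ and $u(0)$ for elements of $D(E)$ is the one step needing care. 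It rests on the adjointness formula (\ref{eq:adjointness}) together with the uniqueness part of Lemma \ref{def:weak_extension}, both already established.

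Finally, I apply the a priori growth bound (\ref{eq:growth_bound}) to $w$, whose data are $f=0$ and $g=0$. This gives $\|w(t)\|^2\le e^{(1+2\omega_0)t}\cdot 0=0$ for all $t\in[0,T]$, so $w=0$ and therefore $u_1=u_2$. I do not expect any real obstacle: the only hypothesis to verify is $\omega_0<\infty$, which is assumed, and the rest is linearity.
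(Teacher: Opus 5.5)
Your proposal is correct and follows essentially the paper's argument: the difference of two weak solutions in $D(E)$ again lies in $D(E)$ with zero data, and the growth bound (\ref{eq:growth_bound}) then forces it to vanish. The paper uses the identification $\tilde{E}w=Ew$, $\tilde{S}w=w(0)$ for $w\in D(E)$ only implicitly; you make it explicit through the adjointness formula and the uniqueness part of Lemma \ref{def:weak_extension}, which clarifies that step rather than changing the route.
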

\begin{proof}
    Let $u,\tilde{u}\in D(E)$ with $Eu=E\tilde{u}=f$ and $u(0)=\tilde{u}(0)=g$. Then $u-\tilde{u}$ is a weak solution in $D(E)$ for $f=0$ and $g=0$. Hence, the growth bound implies $\|u(t)-\tilde{u}(t)\|=0$ for all $t\in[0,T]$.
\end{proof}

\subsection{Outlook}\label{sec:outlook}
The results obtained are not quite enough to prove the existence of \textit{unique} weak solutions. Analogous to the work by Friedrichs \cite{friedrichs1954}, it might be possible to close the gap between the existence and uniqueness under reasonable assumptions. In order to illustrate the problem, let us derive a sufficient criterion for the uniqueness of weak solutions. 

Let $E_0$ and $ E^+_0$ denote the restrictions of $E$ and $E^+$ to the set of functions with $u(0)=0$ and $v(T)=0$, respectively, i.e.
\begin{align*}
    D(E_0) &:= \{u\in D(E) : u(0)=0\} \, , \\
    D(E^+_0) &:= \{v\in D(E^+) : v(T)=0\} \, .
\end{align*}
Since $D(E) \cap H^1_0(0,T;X)$ and $D(E^+)\cap H^1_0(0,T;X)$ are dense in $L^2(0,T;X)$, the adjointness formula (\ref{eq:adjointness}) guarantees that the operators $E_0^\ast$ and ${E^+_0}^\ast$ are densely defined, where $^\ast$ denotes the adjoint on $L^2(0,T;X)$. Hence, the operators $E_0$ and $E^+_0$ are closable with closures $\overline{E_0}={E_0^\ast}^\ast$ and $\overline{E^+_0}={{E^+_0}^\ast}^\ast$, respectively.

Now let $u_1, u_2$ be two weak solutions and define $u:=u_2-u_1$. Then
\begin{align*}
    (u,E^+_0v)_{L^2(0,T;X)} &= 0 \, ,
\end{align*}
for all $v\in D(E^+_0)$. Hence, $u\in D({E^+_0}^\ast)$ and 
\begin{align*}
    {E^+_0}^\ast u &= 0 \, .
\end{align*}
Since $E^+_0 \subseteq E^\ast_0$, we have 
\begin{align*}
    \overline{E_0} =  {E_0^\ast}^\ast \subseteq {E^+_0}^\ast \, .
\end{align*}
Mind that ${E^+_0}^\ast$ might be a proper extension of $\overline{E_0}$. Now suppose there exists a sequence $u_n$ and some $\omega$ such that
\begin{align}
    D(E_0)\ni u_n \to u \in L^2(0,T;X) \, , \label{eq:sufficient_1} \\
    E_0u_n \to w \in L^2(0,T;X) \, .\label{eq:sufficient_2}
\end{align}
Then, we have $u\in D(\overline{E_0})$ and $\overline{E_0}u = {E^+_0}^\ast u=0$. Integrating the growth bound (\ref{eq:growth_bound_2}), inserting $u_n$, and taking the limit $n\to \infty$ yields 
\begin{align*}
   \|u\|^2 &\leq \frac{e^{(1+2\omega_0)T}-1}{1+2\omega_0} \left\|\overline{E_0}u \right\|^2 = 0 \, .
\end{align*}

We conclude the following statement: Let $\omega<\infty$ and $\omega_0<\infty$. Suppose for all weak solutions $u$ with $\tilde{E}u=0$ and $\tilde{S}u=0$ there exists a sequence $u_n$ satisfying eqs.~(\ref{eq:sufficient_1})(\ref{eq:sufficient_2}). Then, for all $f\in L^2(0,T;X)$ and $g\in X$, there exists a \textit{unique} weak solution in $L^2(0,T;X)$ with $\tilde{E}u=f$ and $\tilde{S}u=g$.

\section{Application to the projection operator formalism}\label{sec:application}
Let $U(t)$ be a strongly continuous quasicontraction semigroup on a complex Hilbert space $H$ with generator $L$. Then there exists some $\lambda\in\mathbb{R}$ such that
\begin{align}
    \|U(t)x\| &\leq e^{\lambda t}\|x\| \label{eq:quasi_contraction}
\end{align}
for all $x \in H$ and $t\geq 0$. According to Davies \cite[Lemma 2.2]{davies}, the smallest constant $\lambda$ for which eq.~(\ref{eq:quasi_contraction}) holds is given by 
\begin{align*}
    \lambda_{\min} := \sup \{\Re(z)|z\in \num(L)\} \, ,
\end{align*}
where $\num(L)$ is the numerical range of $L$. 

Now let $P$ and $Q=1-P$ be orthogonal projections on $H$. For the Hilbert space $X$, we choose the range of $Q$, denoted by $X=R(Q)$. We assume that $LQ$ as well as $L^\dagger Q$ are densely defined. For the operator $A$, we choose $A:=\overline{QL}\big\rvert_X$. 

Before we proceed, we have to show that $A$ is densely defined and closable. Since $L$ is a generator, it is densely defined. Hence, $QL$ is densely defined. The adjoint is given by $[QL]^\dagger=L^\dagger Q$ \cite[p.~330, Theorem 13.2]{rudin1974}. Since $L^\dagger Q$ is densely defined, the adjoint of $QL$ is densely defined. Hence, $QL$ is closable. It is easy to show that the operator $\overline{QL}Q$ is again a closed operator. Since $LQ$ is densely defined, $\overline{QL}Q$ is also densely defined. It follows that the part of $\overline{QL}Q$ in $X=R(Q)$, denoted by $A=\overline{QL}\big\rvert_X$, is again closed and densely defined. 

\begin{lemma}[Existence of weak solutions]\label{lemma:existence_2}
    There exists a weak solution $u\in L^2(0,T;X)$ for all $f\in L^2(0,T;X)$ and $g\in X$. 
\end{lemma}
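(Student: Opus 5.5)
The plan is to reduce the statement to Theorem~\ref{theorem:existence}. That theorem gives the claim as soon as $A=\overline{QL}\big\rvert_X$ is densely defined and closable on $X=R(Q)$ (shown above) and $\omega<\infty$, where $\omega$ is the supremum of $\Re(x,A^\dagger x)/(x,x)$ over $0\neq x\in D(A^\dagger)$ and $A^\dagger$ is the adjoint of $A$ in $X$. So the only thing to prove is $\omega\le\lambda_{\min}<\infty$, with $\lambda_{\min}$ the quasicontraction constant of $U(t)$ from eq.~(\ref{eq:quasi_contraction}).

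First I identify $A^\dagger$. The adjoint of $QL$ in $H$ is $L^\dagger Q$, hence $(\overline{QL})^\dagger=L^\dagger Q$ as well. For the part in $X$, I expect $A^\dagger=QL^\dagger\big\rvert_X$, restricted to $\{x\in X: x\in D(L^\dagger)\}$, or at least that $A^\dagger$ is contained in it. To check this, take $y\in D(A^\dagger)\subseteq X$. For every $x\in D(A)$ we have $(Ax,y)=(\overline{QL}x,y)$. Any $z\in D(\overline{QL}Q)$ can be written $z=Pz+Qz$, and $\overline{QL}Q z=AQz$. Because $y=Qy$, I get $(\overline{QL}Qz,y)=(A Qz,y)=(Qz,A^\dagger y)=(z,A^\dagger y)$, using $A^\dagger y\in X$. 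Hence $y\in D((\overline{QL}Q)^\dagger)$.

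The next task is to show $(\overline{QL}Q)^\dagger\supseteq Q L^\dagger Q$-type behaviour, namely that $y\in D(L^\dagger)$ when $y=Qy$. This is the delicate step. The domain of $\overline{QL}Q$ may be much smaller than $D(L)$, so membership of $y$ in $D(L^\dagger)$ is not automatic. What I actually need is weaker: a bound on the real part of the quadratic form. It is cleaner to bound the numerical range of $A^\dagger$ directly, without identifying the operator exactly.

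The key step is the numerical range bound for $A$, via density. For $x\in D(L)\cap X$ we have $\Re(x,Ax)=\Re(x,QLx)=\Re(Qx,Lx)=\Re(x,Lx)\le\lambda_{\min}\|x\|^2$. This extends to the closure $\overline{QL}$ on $D(A)$ by graph-norm approximation, provided the core property holds. From this bound on $A$ one gets $\omega\le\lambda_{\min}$ by the standard argument: $\lambda-A$ is injective with closed range for $\lambda>\lambda_{\min}$, and the deficiency is controlled through $A^\dagger$.

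A more direct route avoids the core issue, and I would try it first. Use $A^\dagger\subseteq Q L^\dagger\big\rvert_X$, obtained because $L^\dagger Q$ is densely defined. Then for $y\in D(A^\dagger)$ with $y\in D(L^\dagger)$ we get $\Re(y,A^\dagger y)=\Re(y,L^\dagger y)\le\lambda_{\min}\|y\|^2$. Here I use that $L^\dagger$ generates the adjoint semigroup $U(t)^\dagger$, with the same bound $e^{\lambda t}$, so by Davies' lemma its numerical range lies in $\{\Re z\le\lambda_{\min}\}$. This gives $\omega\le\lambda_{\min}<\infty$, and Theorem~\ref{theorem:existence} then yields the weak solution. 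The main obstacle is justifying $D(A^\dagger)\subseteq D(L^\dagger)$, i.e. computing the adjoint of the part of a closure in $R(Q)$. I would handle it by showing $A^\dagger=(\overline{QL}Q)^\dagger\big\rvert_X=(L^\dagger Q)\big\rvert_X$, using $(\overline{QL}Q)^\dagger\supseteq Q(QL)^\dagger=QL^\dagger Q$ together with the decomposition $H=R(P)\oplus X$.
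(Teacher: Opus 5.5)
Your overall strategy is the right one: bound the numerical range of $A^\dagger$ by that of $L^\dagger$, then use the adjoint semigroup and Davies' lemma. Your second paragraph also correctly establishes $A^\dagger\subseteq(\overline{QL}Q)^\dagger$.

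The gap is the step you flag yourself. The inclusion $D(A^\dagger)\subseteq D(L^\dagger)$, equivalently $A^\dagger\subseteq QL^\dagger\big\rvert_X$, is false in general. The relation you propose to derive it from points the wrong way: $(\overline{QL}Q)^\dagger\supseteq QL^\dagger Q$ only says the adjoint \emph{extends} $QL^\dagger Q$, which is exactly why $D(A^\dagger)$ may exceed $X\cap D(L^\dagger)$. Since $A^\dagger$ is closed and always contains $QL^\dagger\big\rvert_X$, your claim would force $QL^\dagger\big\rvert_X$ to be closed, and this can fail when $P$ has infinite rank.

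Concretely, take $H=\ell^2\oplus\ell^2$, $D=\mathrm{diag}(1,2,3,\dots)$, $L(a,b)=(-Db,Da)$ on $D(D)\oplus D(D)$, and $Q(a,b)=(a,0)$. This $L$ is skew-adjoint, so $U(t)$ is unitary. Then $LQ$ and $L^\dagger Q$ are densely defined, and $\overline{QL}(a,b)=(-Db,0)$ on $\ell^2\oplus D(D)$. Hence $A=0$ on all of $X$ and $D(A^\dagger)=X$, while $X\cap D(L^\dagger)=D(D)\oplus 0\neq X$.

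Your fallback route fails as well, for two reasons:
\begin{itemize}
\item Bounding $\Re(x,Ax)$ on $D(A)$ requires $D(L)\cap X$ to be a core for $A$. This is precisely what can fail: $\overline{QL\rvert_X}\subsetneq\overline{QL}\big\rvert_X$ in the Givon et al.\ example cited in the paper.
\item Even a bound on $\num(A)$ does not transfer to $A^\dagger$ without a range condition on $\lambda-A$. For instance, $\frac{d}{dx}$ on $H^1_0(0,\infty)$ has $\Re(u,u')=0$, but its adjoint $-\frac{d}{dx}$ on $H^1(0,\infty)$ has $\Re(v,-v')=\frac12|v(0)|^2$, which is unbounded on the unit sphere.
\end{itemize}

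The missing idea is that you never need $y\in D(L^\dagger)$; approximation in the graph norm suffices. Apply $(QS)^\dagger=S^\dagger Q$ (valid for bounded $Q$) with $S=L^\dagger Q$ to get $(QL^\dagger Q)^\dagger=(L^\dagger Q)^\dagger Q=\overline{QL}Q$. Hence $(\overline{QL}Q)^\dagger=\overline{QL^\dagger Q}$, and together with your second paragraph, $A^\dagger\subseteq\overline{QL^\dagger Q}$. For $y\in D(A^\dagger)$, choose $y_n\in D(L^\dagger Q)$ with $y_n\to y$ and $QL^\dagger Qy_n\to A^\dagger y$. Then $\Re(y_n,QL^\dagger Qy_n)=\Re(Qy_n,L^\dagger Qy_n)\le\lambda_{\min}\|Qy_n\|^2\to\lambda_{\min}\|y\|^2$, since $y\in X$. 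This gives $\omega\le\lambda_{\min}$ without any domain identification. It is the paper's argument, which the paper phrases as $\num(\overline{T})\subseteq\overline{\num(T)}$ applied to $T=QL^\dagger Q$.
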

\begin{proof}
    We have to show that
    \begin{align*}
        \omega := \sup \left\{\Re(z)| z \in \num(A^\dagger)\right\} &< \infty \, .
    \end{align*}
    Consider the operator 
    \begin{align*}
        (\overline{QL}Q)^\dagger &= ({[QL]^\dagger}^\dagger Q)^\dagger = {( Q[QL]^\dagger)^\dagger}^\dagger = \overline{QL^\dagger Q} \, ,
    \end{align*}
    where we used \cite[p.~252, Theorem VIII.1]{reed1980} and \cite[p.~330, Theorem 13.2]{rudin1974}. We have
    \begin{align*}
        \num((\overline{QL}Q)^\dagger) &\subseteq  \overline{ \num(QL^\dagger Q)} \, ,
    \end{align*}
    where we used the fact that $\num(\overline{T})\subseteq \overline{\num(T)}$. 
    
    Next, we note that by definition of the adjoint, we have  $D(\overline{QL}\big\rvert_X^\dagger) \subseteq D([\overline{QL}Q]^\dagger)$. Hence, we conclude
    \begin{align*}
        \num(\overline{QL}\big\rvert_X^\dagger) &\subseteq \num((\overline{QL}Q)^\dagger)  \subseteq \overline{\num(QL^\dagger Q)} \subseteq \overline{\num(L^\dagger)} \, .
    \end{align*}
    
    Since $H$ is a Hilbert space and therefore reflexive, the adjoint semigroup $U(t)^\dagger$ is again a strongly continuous quasicontraction semigroup with generator $L^\dagger$ \cite[p.~5-6]{vanNeerven1992}. Furthermore, it satisfies the same growth bounds, since $\|U(t)^\dagger\|= \|U(t)\|$. Hence, according to Davies \cite[Lemma 2.2]{davies}, we have
    \begin{align*}
        \lambda_{\min} &= \sup \left\{\Re(z) | z\in \num(L) \right\} = \sup \left\{\Re(z) | z\in \num(L^\dagger) \right\} \, .
    \end{align*}
    Therefore, we find
    \begin{align*}
        \omega &= \sup\{\Re(z)|z\in\num(\overline{QL}\big\rvert_X^\dagger)\} \leq  \sup\{\Re(z)|z\in\overline{\num(L^\dagger)}\} = \lambda_{\min} < \infty \, .
    \end{align*}

    Hence, the assumptions of theorem \ref{theorem:existence} are fulfilled, and for all $f\in L^2(0,T;X)$ and $g\in X$ there exists a weak solution in $L^2(0,T;X)$.  
\end{proof}

\begin{lemma}[Uniqueness of weak solutions]\label{lemma:uniqueness_2}
    For all $f\in L^2(0,T;X)$ and $g\in X$, there exists at most one weak solution $u$ such that $u\in H^1(0,T;X)$ and $\overline{QL\rvert_X} u \in L^2(0,T;X)$.  
\end{lemma}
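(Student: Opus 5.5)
This follows from Corollary~\ref{corollary:uniqueness}, applied to $A=\overline{QL}\big\rvert_X$ on $X=R(Q)$. Two things must be checked: that the hypothesis $\omega_0<\infty$ holds for this $A$, and that the regularity class in the statement is exactly $D(E)$. The second is essentially definitional. $D(E)=\{u\in H^1(0,T;X): Au\in L^2(0,T;X)\}$, and on $X$ the operator $A$ acts as $\overline{QL}$. So $u\in H^1(0,T;X)$ with $\overline{QL\rvert_X}u\in L^2(0,T;X)$ means $u\in D(E)$, provided we read $\overline{QL\rvert_X}$ as the same closed operator $A$ (the part of $\overline{QL}Q$ in $R(Q)$). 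I will state this identification explicitly. Once $\omega_0<\infty$ is known, any two such weak solutions lie in $D(E)$ with the same $f$ and $g$, and Lemma~\ref{def:weak_extension} gives $\tilde E u=Eu$ and $\tilde S u=u(0)$ for $u\in D(E)$. The corollary then yields $u=\tilde u$.

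The main work is the bound $\omega_0=\sup\{\Re z: z\in\num(A)\}<\infty$. I will mirror the proof of Lemma~\ref{lemma:existence_2}, this time without passing to adjoints. Let $x\in D(A)\subseteq R(Q)$ with $\|x\|=1$. Then $x\in D(\overline{QL})$, so there is a sequence $x_n\in D(L)$ with $x_n\to x$ and $QLx_n\to \overline{QL}x=Ax$. Since $Qx=x$ and $Q$ is self-adjoint,
\[
(x,Ax)=\lim_n (x,QLx_n)=\lim_n (Qx,Lx_n)=\lim_n (x_n,Lx_n),
\]
where the last step uses $(x-x_n,Lx_n)\to 0$.

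This last step is the delicate point. Boundedness of $QLx_n$ does not control $Lx_n$, so $(x-x_n,Lx_n)$ need not vanish. To avoid this, I will use instead the representation $D(A)$ coming from the densely defined operator $LQ$. The natural candidates are $x=Qy$ with $Qy\in D(L)$. For these,
\[
(x,Ax)=(Qy,QLQy)=(Qy,LQy)\in\num(L)\cdot\|Qy\|^2 .
\]
So for $x\in R(Q)\cap D(L)$ we get $\Re(x,Ax)\le\lambda_{\min}\|x\|^2$, using Davies' lemma for the quasicontraction semigroup $U(t)$. For general $x\in D(A)$ I will argue via the closure: $A$ is the part of $\overline{QL}Q$, which is the closure of $QLQ$ restricted to $D(LQ)$. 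Hence every $x\in D(A)$ is a graph-norm limit of elements $x_n=Qy_n$ with $Qy_n\in D(L)$, and $\num(\overline{T})\subseteq\overline{\num(T)}$ gives $\Re(x,Ax)\le\lambda_{\min}\|x\|^2$. Therefore $\omega_0\le\lambda_{\min}<\infty$.

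I expect the main obstacle to be justifying that $\overline{QL}Q$ coincides with the closure of $QLQ$ on $D(LQ)$, i.e.\ that $D(LQ)$ is a core. If this fails, I would instead shrink the uniqueness class to the closure of $QLQ$. Alternatively, I would estimate directly: for $u\in D(E)$, $\frac{\dd}{\dd t}\|u\|^2=2\Re(Eu,u)+2\Re(Au,u)$, and one only needs the bound $\Re(Au(t),u(t))\le\lambda_{\min}\|u(t)\|^2$ for a.e.\ $t$, which again reduces to the same numerical-range inequality.
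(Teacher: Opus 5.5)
\textbf{Gap: the core claim and the misreading of the operator.} Your main line has a genuine gap. You read $\overline{QL\rvert_X}$ as $A=\overline{QL}\big\rvert_X$. You then try to get $\omega_0<\infty$ for $A$ by asserting that every $x\in D(A)$ is a graph-norm limit of elements of $X\cap D(L)$, i.e.\ that $D(LQ)$ is a core for $\overline{QL}Q$. Since the closure of $QLQ$ on $D(LQ)$ equals $\overline{QL\rvert_X}\,Q$, that assertion is equivalent to $\overline{QL\rvert_X}=\overline{QL}\big\rvert_X$. This is false in general: as remarked immediately after the lemma, in the example of Givon et al.\ $\overline{QL}\big\rvert_X$ is a proper extension of $\overline{QL\rvert_X}$. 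Your first attempt already shows that the direct approximation argument breaks down on the extra part of $D(A)$, because $(x-x_n,Lx_n)$ is not controlled by the convergence of $QLx_n$. So Corollary~\ref{corollary:uniqueness} cannot be applied to $A$ as you intend; this is exactly why the paper says the estimate for $A$ ``appears to be difficult''. Relatedly, the stated regularity class is in general only a subset of $D(E)$, not equal to it.

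\textbf{Repair: your fallback is the paper's proof.} The lemma is already stated for the smaller operator $\tilde A:=\overline{QL\rvert_X}$, which is the part in $X$ of the closure of $QLQ$ on $D(LQ)$. So ``shrinking the class'' changes nothing; it is the statement. Your computation $(x,QLx)=(Qx,Lx)=(x,Lx)$ for $x\in X\cap D(L)$ gives $\num(QL\rvert_X)\subseteq\num(L)$. Hence $\num(\tilde A)\subseteq\overline{\num(QL\rvert_X)}\subseteq\overline{\num(L)}$ and $\tilde\omega_0\le\lambda_{\min}<\infty$, with no core assumption.

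Then use only the inclusion $\tilde A\subseteq A$. Any $u\in H^1(0,T;X)$ with $\tilde Au\in L^2(0,T;X)$ lies in $D(E)$ with $Au(t)=\tilde Au(t)$, so $\tilde Eu=Eu$ and $\tilde Su=u(0)$ by Lemma~\ref{def:weak_extension}. The Gronwall argument of the growth-bound lemma only uses $\Re(Au(t),u(t))\le\omega_0\|u(t)\|^2$ at the values $u(t)$. These lie in $D(\tilde A)$, so the argument runs with $\lambda_{\min}$ in place of $\omega_0$. Applying it to the difference of two solutions in the class (a linear space, with data $f=0$, $g=0$) gives uniqueness.

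You must rerun the proof of Corollary~\ref{corollary:uniqueness} on this class rather than quote it, since its hypothesis concerns all of $D(A)$. Your closing ``estimate directly'' remark is exactly this argument, once the statement is read with $\tilde A$.
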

\begin{proof}
    We would like to show $\omega_0<\infty$ in order to apply the uniqueness result from corollary \ref{corollary:uniqueness}. However, it appears to be difficult to establish the desired estimate for the logarithmic norm of $A=\overline{QL}\big\rvert_X$ under the given assumptions. Hence, we only show that the logarithmic norm $\tilde{\omega}_0$ for the operator $\tilde{A}:=\overline{QL\rvert_X}$ is finite. We have already shown that $QL$ is closable. This implies that $QL\rvert_X$ is also closable. The closures are related by $\tilde{A}=\overline{QL\rvert_X}\subseteq \overline{QL}\big\rvert_X=A$. 

    We have to show that
    \begin{align*}
       \tilde{\omega}_0 &:=\sup\{\Re(z)|z\in\num(\tilde{A})\} < \infty \, .
    \end{align*}
    For the numerical range of $\tilde{A}$, we find
    \begin{align*}
        \num(\overline{QL\rvert_X}) \subseteq \overline{\num(QL\rvert_X)} \subseteq \overline{\num(L)} \, .
    \end{align*}
    This implies 
    \begin{align*}
        \tilde{\omega}_0 = \sup\{\Re(z)|z\in\num(\overline{QL\rvert_X})\} \leq \sup\{\Re(z)|z\in\overline{\num(L)}\} =  \lambda_{\min}<\infty\, .
    \end{align*}
    
    Since $\tilde{A}=\overline{QL\rvert_X}\subseteq \overline{QL}\big\rvert_X=A$, any weak solution in $H^1(0,T;X)$ such that $\tilde{A}u\in L^2(0,T;X)$ satisfies the growth bound (\ref{eq:growth_bound}).     Hence, the same argument as in the proof of corollary (\ref{corollary:uniqueness}) shows that there exists at most one weak solution $u$ such that $u\in H^1(0,T;X)$ and $\overline{QL\rvert_X} u \in L^2(0,T;X)$.  
\end{proof}

We emphasize that there are examples for which $\overline{QL}\big\rvert_X$ is a proper extension of $\overline{QL\rvert_X}$. Such an example is given by Givon et al. \cite[Section 3]{givon2005}, which is also discussed in \cite[Example 3.10]{widder2025}. 

\subsection{Zwanzig's projection operator}\label{sec:zwanzig}

Let $\rho \in L^1(\mathbb{R}^n,\mathcal{B}(\mathbb{R}^n),\mu)$ be nonnegative and normalized, where $\mathcal{B}(\mathbb{R}^n)$ is the Borel $\sigma$-algebra and $\mu$ is the Lebesgue measure. For the Hilbert space $H$, we choose 
\begin{align*}
    H&:= L^2_\rho(\mathbb{R}^n) := L^2(\mathbb{R}^n,\mathcal{B}(\mathbb{R}^n), \rho\mu) \, ,
\end{align*}
where the probability measure $\rho\mu:\mathcal{B}(\mathbb{R}^n)\to[0,1]$ is defined according to
\begin{align*}
    \rho\mu(V) &:= \int_V \rho \,\dd\mu \, .
\end{align*}
The scalar product on $L^2_\rho(\mathbb{R}^n)$ is given by
\begin{align*}
    (u,v) &:= \int u^\ast v \, \rho \, \dd \mu \, .
\end{align*}
In the following, we write $\rho=\rho(\mathbf{x},\mathbf{y})$ for $\mathbf{x}\in \mathbb{R}^m$ and $\mathbf{y}\in\mathbb{R}^{n-m}$. 

We can now define the Zwanzig projection operator according to 
\begin{align}
    Pu &:= \frac{\int  u(\cdot,\mathbf{y}) \rho(\cdot,\mathbf{y}) \,\dd \mathbf{y}  }{\int \rho(\cdot,\mathbf{y}) \,\dd \mathbf{y}} \, . \label{equ:zwanzig}
\end{align}
Note that $Pu$ may be undefined on sets with probability measure zero. The self-adjointness of $P$ is easily established using Fubini's theorem. Clearly, $P$ is idempotent. Hence, $P$ is an orthogonal projection on $L^2_\rho(\mathbb{R}^n)$. 

We are interested in cases where $L$ and $L^\dagger$ are first-order differential operators. The following lemma establishes sufficient conditions such that the operators $L$ and $L^\dagger $ are densely defined on $X=R(Q)$, cf. \cite[Lemma 5.2]{givon2005}. 

\begin{lemma}\label{lemma:dense_subspace}
    Let $\rho \in C^1(\mathbb{R}^n)$ be nonnegative and normalized. Then  $C^1_c(\mathbb{R}^n) \cap X$ is dense in $X$.
\end{lemma}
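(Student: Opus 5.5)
We take an arbitrary $h\in X$ and $\varepsilon>0$ and build $\varphi\in C^1_c(\mathbb{R}^n)\cap X$ with $\|h-\varphi\|<\varepsilon$. The natural candidate is $\varphi := Q\psi = \psi - P\psi$, where $\psi\in C^1_c(\mathbb{R}^n)$ approximates $h$. Since $Q$ is an orthogonal projection and $Qh=h$, we get $\|h-Q\psi\| = \|Q(h-\psi)\|\le\|h-\psi\|$, so the norm estimate is automatic. The whole difficulty is therefore to ensure that $Q\psi$ (or a suitable modification) is again $C^1$ with compact support.

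\textbf{Step 1: density of $C^1_c$ in $H$.} $C_c(\mathbb{R}^n)$ is dense in $L^2_\rho(\mathbb{R}^n)$, since $\rho\mu$ is a finite Radon measure. Uniform approximation on compacts by mollification then gives density of $C^1_c(\mathbb{R}^n)$, because $\rho$ is bounded on compacts.

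\textbf{Step 2: regularity of $P\psi$.} Write $\bar\rho(\mathbf{x}) := \int\rho(\mathbf{x},\mathbf{y})\,\dd\mathbf{y}$. Then
\begin{align*}
P\psi(\mathbf{x}) = \frac{\int\psi(\mathbf{x},\mathbf{y})\rho(\mathbf{x},\mathbf{y})\,\dd\mathbf{y}}{\bar\rho(\mathbf{x})} .
\end{align*}
For $\psi\in C^1_c$, the numerator $N_\psi(\mathbf{x})$ is $C^1$ with compact support, by differentiation under the integral sign using $\rho\in C^1$ and the compact support of $\psi$. The denominator, however, is neither guaranteed to be $C^1$ nor to be bounded away from zero. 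Moreover, $P\psi$ depends only on $\mathbf{x}$, so $P\psi$ never has compact support in $\mathbb{R}^n$ unless it vanishes.

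So $Q\psi$ itself is the wrong object, and we replace it by a compactly supported correction. Choose $\chi\in C^1_c(\mathbb{R}^n)$ with $P\chi$ well-behaved; the natural choice is $\chi(\mathbf{x},\mathbf{y}) = a(\mathbf{x})b(\mathbf{y})$. Set
\begin{align*}
\varphi := \psi - \frac{N_\psi}{N_\chi}\,\chi .
\end{align*}
Then
\begin{align*}
P\varphi = \frac{N_\psi - \frac{N_\psi}{N_\chi}N_\chi}{\bar\rho} = 0,
\end{align*}
so $\varphi\in X$. Here we use that $P$ is multiplicative over functions of $\mathbf{x}$ alone, so that $P(c(\mathbf{x})\chi)=c\,P\chi$, together with $N_{c\chi}=cN_\chi$. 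This construction requires $N_\chi>0$ on $\operatorname{supp}N_\psi$, which fails where $\rho(\mathbf{x},\cdot)$ vanishes.

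\textbf{Step 3: localization.} Split $\mathbf{x}$-space into $U=\{\bar\rho>0\}$, which is open by continuity of $\rho$, and its complement. Note that $\rho=0$ a.e. on $(\mathbb{R}^n\setminus U)\times\mathbb{R}^{n-m}$, so that region is negligible in $H$, and $\psi$ may be modified there freely.

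Approximate $h$ by $\psi\in C^1_c$ whose $\mathbf{x}$-support lies in a compact $K\subset U$, using that $\rho\mu$ is concentrated on $U\times\mathbb{R}^{n-m}$. Cover $K$ by finitely many balls $B_j$, each with a point $\mathbf{y}_j$ such that $\rho>0$ on $B_j\times B(\mathbf{y}_j,r_j)$, and take a $C^1$ partition of unity $a_j$ subordinate to this cover. Apply the correction ball by ball:
\begin{align*}
\varphi = \psi - \sum_j \frac{N_{a_j\psi}}{N_{\chi_j}}\chi_j, \qquad \chi_j = \tilde a_j(\mathbf{x})\, b_j(\mathbf{y}),
\end{align*}
where $b_j\ge0$ is a bump on $B(\mathbf{y}_j,r_j)$ and $\tilde a_j=1$ on $\operatorname{supp}a_j$. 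Then $N_{\chi_j}>0$ on $\operatorname{supp}a_j$, so each quotient is $C^1$ with compact support, $\varphi\in C^1_c\cap X$, and $P\varphi=0$.

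\textbf{Step 4: the approximation estimate, which is the expected obstacle.} We must show $\|h-\varphi\|$ is small. Since $Ph=0$, the correction terms are controlled by $N_{a_j\psi}=N_{a_j(\psi-h)}$. The danger is that dividing by $N_{\chi_j}$ inflates norms. The plan is to fix the cover and the functions $\chi_j$ first (they depend only on $K$, which is determined by $h$ and $\varepsilon$), and then choose $\psi$ close to $h$. Cauchy–Schwarz in $\mathbf{y}$ gives
\begin{align*}
\|N_{a_j(\psi-h)}\|_{L^2(\bar\rho)}\lesssim \|\psi-h\|,
\end{align*}
and on compacts the quantities $1/N_{\chi_j}$ and $\|\chi_j\|$ are fixed constants. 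Hence the correction norm tends to zero as $\psi\to h$, which makes $\|h-\varphi\|$ as small as desired.
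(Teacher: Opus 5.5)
Your Steps 2--3 are sound. With $\sum_j a_j=1$ on the $\mathbf{x}$-support $K$ of $\psi$ and $N_{\chi_j}>0$ near $\operatorname{supp}a_j$, you do obtain $\varphi\in C^1_c(\mathbb{R}^n)$ with $P\varphi=0$. The gap is in Step 4, which you rightly identify as the crux. Since $N_{a_j\psi}=a_j\bar\rho\,P\psi$ and $N_{\chi_j}=\bar\rho\,P\chi_j$, the $j$-th correction term equals $\frac{\chi_j}{P\chi_j}\,a_j\,P\psi$. Taking $0\le b_j\le 1$, its norm is at most $C_j\|P\psi\|=C_j\|P(\psi-h)\|$, with $C_j=\sup_{\operatorname{supp}a_j}\bar\rho/N_{\chi_j}$; this is the reciprocal of the fraction of the slice mass $\bar\rho(\mathbf{x})$ that your small bump $b_j$ sees. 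The constant depends on $K$, and your order of choices does not work. Once $K$ and the $\chi_j$ are fixed, $\psi$ must still have $\mathbf{x}$-support in $K$ (otherwise $\sum_j N_{a_j\psi}\neq N_\psi$ and $P\varphi\neq 0$). Hence $\|\psi-h\|\geq\|\mathbf{1}_{\mathbb{R}^m\setminus K}(\mathbf{x})\,h\|$ cannot tend to zero, and your chain of estimates only yields a bound of order $C_K\varepsilon$ whose constant depends on $\varepsilon$ through $K$. Moreover, $C_j<\infty$ is hidden in your ``$\lesssim$'' (Cauchy--Schwarz in $\mathbf{y}$ produces a factor $\bar\rho$), and it requires $\bar\rho$ to be bounded on $\operatorname{supp}a_j$. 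That does not follow from $\rho\in C^1$ and $\int\rho\,\dd\mu=1$: a slice $\rho(\mathbf{x}_0,\cdot)$ can fail to be integrable, and then $\bar\rho\to\infty$ near $\mathbf{x}_0$ by Fatou's lemma.

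The paper's proof is organised so that this ratio is bounded by a universal constant. It uses one global corrector $\eta$, a mollified cube in $\mathbf{y}$ of $\mathbf{x}$-dependent half-width $R(\mathbf{x})$ with $0\le\eta\le1$, chosen so that $\int\eta\rho\,\dd\mathbf{y}\geq\frac12\bar\rho$. Then $\eta/P\eta\le 2$ pointwise, and $w=\tilde v-\eta\,P\tilde v/P\eta$ satisfies $\|\tilde v-w\|\le 2\|P\tilde v\|=2\|P(\tilde v-u)\|\le 2\|\tilde v-u\|$, regardless of where $\tilde v$ is supported. So the paper needs neither a partition of unity nor any careful ordering of choices. (The paper's ``clearly, $R$ can be chosen'' tacitly needs similar local control of the slice masses, so the boundedness issue is not what separates you from the paper; the $K$-independent constant is.)

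Within your framework there are two repairs.
\begin{itemize}
\item Truncate first. Fix compact sets $K\subset\operatorname{int}K'\subset K'\subset U$ with $\|h-\mathbf{1}_K(\mathbf{x})h\|<\varepsilon$. Note that $\mathbf{1}_K(\mathbf{x})h\in X$, because $P$ commutes with multiplication by functions of $\mathbf{x}$. Build the $a_j,\chi_j$ over $K'$, and approximate $\mathbf{1}_K(\mathbf{x})h$ rather than $h$ by $\psi$ supported over $K'$. Then $\|P\psi\|=\|P(\psi-\mathbf{1}_K(\mathbf{x})h)\|$ can be made small against the now fixed constant, still assuming $\bar\rho$ is bounded on $K'$.
\item Bypass the estimate by duality. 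Suppose $g\in X$ is orthogonal to every $\psi-\frac{N_\psi}{N_\chi}\chi$ with $\psi\in C^1_c(B\times\mathbb{R}^{n-m})$, where $N_\chi>0$ on $\overline{B}$. Then $g$ coincides $\rho\mu$-a.e.\ on $B\times\mathbb{R}^{n-m}$ with $N_{g\chi}/N_\chi$, a function of $\mathbf{x}$ alone, so $g=Pg=0$ there. Countably many such balls cover $U$, which gives $g=0$, and this route needs no bound on $\bar\rho$ at all.
\end{itemize}
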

\begin{proof}
    The proof relies on a slight modification of the construction used in \cite[Lemma 5.2]{givon2005}.

    We show that a function $u\in X$ can be approximated by a function $w \in C^1_c(\mathbb{R}^n) \cap X$. Let $u\in X$ and $\varepsilon>0$. Since $C^1_c(\mathbb{R}^n)$ is dense in $L^2_\rho(\mathbb{R}^n)$, there exists a function $v \in C^1_c(\mathbb{R}^n)$ such that $\|u-v\| \leq \varepsilon$.
    Next, we truncate and mollify $v$ to obtain a function $\tilde{v}\in C^1_c(\mathbb{R}^n)$ such that the support of $\tilde{v}$ is contained in the interior of the support of $\rho$ and $\|v-\tilde{v}\|<\varepsilon$. This is possible, because the mollification converges uniformly on compact sets \cite[p.~108, Proposition 4.21]{Brezis2011}. We have
    \begin{align*}
        \|u-\tilde{v}\| &\leq 2\varepsilon \, ,\\
        \|P\tilde{v}\| &\leq 2\varepsilon \, , \\
        \text{supp}(\tilde{v})&\subseteq \text{int}(\text{supp}(\rho)) \, .
    \end{align*}
    Now let $\eta(\mathbf{x},\cdot)$ be the mollification of a $(n-m)$-dimensional cube, i.e.
    \begin{align*}
        \eta(\mathbf{x},\cdot) &:= \tilde{\eta} \star \chi_{[-R(\mathbf{x}), R(\mathbf{x})]^{n-m}} \, ,
      \end{align*}
    where $\tilde{\eta}$ is a mollifier on $\mathbb{R}^{n-m}$, $\chi$ is the characteristic function, $R(\mathbf{x})$ is continuously differentiable and $\star$ is the convolution. For $\tilde{\eta}$ we choose the normalized bump function on $\mathbb{R}
    ^{n-m}$, such that $\eta \leq 1$ on $\mathbb{R}^n
    $. Clearly, $R(\mathbf{x})$ can be chosen such that
    \begin{align*}
        \int_{\mathbb{R}^{n-m}} \eta(\mathbf{x},\mathbf{y}) \rho(\mathbf{x},\mathbf{y}) \dd \mathbf{y} &\geq \frac{1}{2}\int_{\mathbb{R}^{n-m}} \rho(\mathbf{x},\mathbf{y}) \dd \mathbf{y} 
    \end{align*}
    for all $\mathbf{x}\in\mathbb{R}^m$. This implies that $\frac{\eta(\mathbf{x},\mathbf{y})}{(P\eta)(\mathbf{x})} \leq 2$ for all $\mathbf{x},\mathbf{y}$ with $\int \rho(\mathbf{x},\tilde{\mathbf{y}}) \dd \tilde{\mathbf{y}} >0$. 
    
    Finally, let us define
    \begin{align*}
        w &:=
            \tilde{v} - \eta\frac{P\tilde{v}}{P\eta} \,  .
    \end{align*}
    We have to show that $w \in C^1_c(\mathbb{R}^n)\cap X$. Clearly, $\eta$ is continuously differentiable. Using the dominated convergence theorem, it is straightforward to show that the integrals
    \begin{align*}
        \int_{\mathbb{R}^{n-m}} \eta(\mathbf{x},\mathbf{y}) \rho(\mathbf{x},\mathbf{y}) \dd \mathbf{y} \, , \\
        \int_{\mathbb{R}^{n-m}} \tilde{v}(\mathbf{x},\mathbf{y}) \rho(\mathbf{x},\mathbf{y}) \dd \mathbf{y}
    \end{align*}
    are continuously differentiable, cf. the Leibniz integral rule \cite[p.~56, Theorem 2.27]{folland}. This shows that $\frac{P\tilde{v}}{P\eta}$ is continuously differentiable, since the support of $P\tilde{v}$ is by construction contained in the interior of the support of $P\eta$. Hence, $w$ is continuously differentiable. We note that $P\tilde{v}$ has compact support as a function of $\mathbf{x}\in \mathbb{R}^m$ and $\eta(\mathbf{x},\cdot)$ has compact support as a function of $\mathbf{y}\in \mathbb{R}^{n-m}$ for all $\mathbf{x}$. Moreover, there exists a compact set in $\mathbb{R}^{n-m}$ that contains the support of $\eta(\mathbf{x},\cdot)$ for any $\mathbf{x}$ in the support of $P\tilde{v}$. This shows that $\eta P\tilde{v}$ has compact support on $\mathbb{R}^n$. Hence, $w\in C^1_c(\mathbb{R}^n)$. Clearly, we have $Pw=0$, i.e. $w\in X=R(Q)$. Thus, we have $w \in C^1_c(\mathbb{R}^n)\cap X$ and 
    \begin{align*}
        \|u-w\| &\leq \|u-\tilde{v}\| + \|\tilde{v}-w\| \\
        &\leq \|u-\tilde{v}\| + 2\left\|P\tilde{v}\right\| \\
        &\leq 6\varepsilon \, .
    \end{align*}
    This concludes the proof. 
\end{proof}

\subsection{Example}\label{sec:example}

In order to apply our results, we require a dynamical system that gives rise to a quasicontraction semigroup $U(t)$ for the time-evolution of random variables in $H=L^2_\rho(\mathbb{R}^n)$. To this end, we use the following lemma.

\begin{lemma}\label{lemma:time_evolution}
    Let $\rho \in L^1(\mathbb{R}^n)$ be nonnegative and normalized. Let $\mathbf{F}\colon\mathbb{R}^n\to\mathbb{R}^n$ be continuously differentiable with bounded derivative. We further assume $\rho\mathbf{F}\in W^{1,1}(\mathbb{R}^n)$ and 
    \begin{align*}
        \lambda &:= \sup_{ \{\Gamma\in\mathbb{R}^n | \rho(\Gamma)\neq 0 \}} \left(-\frac{1}{2\rho}{\rm div}(\rho\mathbf{F}) \right) < \infty.
    \end{align*}
    Then the closure of $(\mathbf{F}\cdot\boldsymbol{\nabla},C^1_c(\mathbb{R}^n))$ in $L^2_\rho(\mathbb{R}^n)$ generates a strongly continuous semigroup $U(t)$ such that $\|U(t)\| \leq e^{\lambda t}$ and $U(t)x=x(\varphi_t)$ for all $x\in C^1_c(\mathbb{R}^n)$ and $t\geq 0$, where $\varphi_t$ denotes the flow generated by the vector field $\mathbf{F}$.
\end{lemma}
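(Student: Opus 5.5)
We define $U(t)$ directly through the flow and then identify its generator. Since $\mathbf{F}\in C^1$ has bounded derivative, it is globally Lipschitz. Hence the flow $\varphi_t$ exists for all $t\in\mathbb{R}$, each $\varphi_t$ is a $C^1$ diffeomorphism, and $\varphi_{t+s}=\varphi_t\circ\varphi_s$. We set $U(t)x:=x\circ\varphi_t$, first for $x\in C_c(\mathbb{R}^n)$, and extend by density once we have a norm bound. The semigroup property is inherited from the flow property.

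\textbf{Step 1 (norm bound).} By the change of variables $\Gamma\mapsto\varphi_{-t}(\Gamma)$,
\begin{align*}
\|U(t)x\|^2=\int |x(\Gamma)|^2\,\rho(\varphi_{-t}(\Gamma))\,J_{-t}(\Gamma)\,\dd\Gamma ,
\end{align*}
where $J_{-t}=\det D\varphi_{-t}$. Let $\rho_t:=(\rho\circ\varphi_{-t})J_{-t}$ be the push-forward density. It solves the continuity equation
\begin{align*}
\partial_t\rho_t=-{\rm div}(\rho_t\mathbf{F}) .
\end{align*}
To obtain $\|U(t)\|\le e^{\lambda t}$ it suffices to show $\rho_t\le e^{2\lambda t}\rho$ a.e. We prove this along characteristics. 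Setting $r(t):=\rho_t(\varphi_t(\Gamma_0))$ gives $r'=-r\,{\rm div}\mathbf{F}$. Evaluated along the same curve, $\rho$ itself satisfies
\begin{align*}
\frac{\dd}{\dd t}\rho(\varphi_t)=\mathbf{F}\cdot\nabla\rho={\rm div}(\rho\mathbf{F})-\rho\,{\rm div}\mathbf{F}\ \ge\ -2\lambda\rho-\rho\,{\rm div}\mathbf{F} .
\end{align*}
Comparing the logarithmic derivatives of $r$ and $\rho(\varphi_t)$ then yields the claimed inequality. This step is the main obstacle. First, $\rho$ is only $L^1$ with $\rho\mathbf{F}\in W^{1,1}$, so pointwise calculus along characteristics is not justified as it stands. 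Second, the set $\{\rho=0\}$ requires care: we must show it is (essentially) invariant under the flow in the relevant direction. To handle both issues, we would rather argue weakly. For $0\le x\in C^1_c$, the function $t\mapsto\int x\circ\varphi_t\,\rho$ is differentiable with derivative $\int (\mathbf{F}\cdot\nabla x)\circ\varphi_t\,\rho$. Integrating by parts with $\rho\mathbf{F}\in W^{1,1}$ gives, for $y\in C^1_c$,
\begin{align*}
\int (\mathbf{F}\cdot\nabla y)\,\rho=-\int y\,{\rm div}(\rho\mathbf{F})\le 2\lambda\int y\,\rho \qquad (y\ge 0).
\end{align*}
The same identity shows ${\rm div}(\rho\mathbf{F})=0$ a.e. on $\{\rho=0\}$, which removes the support issue. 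Applied to $y=|x|^2\circ\varphi_t$, this gives
\begin{align*}
\frac{\dd}{\dd t}\|U(t)x\|^2\le 2\lambda\|U(t)x\|^2 ,
\end{align*}
and Gronwall yields the bound. The function $y$ is $C^1_c$ because $\varphi_{-t}$ maps compacts to compacts.

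\textbf{Step 2 (strong continuity and generator).} For $x\in C^1_c$ we have $x\circ\varphi_t\to x$ uniformly, with supports in a fixed compact set, so $U(t)x\to x$ in $L^2_\rho$. Combined with the uniform bound from Step 1 and the density of $C^1_c$ in $L^2_\rho$, this gives strong continuity. Similarly,
\begin{align*}
\frac{x\circ\varphi_t-x}{t}\to\mathbf{F}\cdot\nabla x
\end{align*}
uniformly on a fixed compact set, so $C^1_c\subseteq D(L)$ and $L=\mathbf{F}\cdot\nabla$ there. Moreover, $C^1_c$ is invariant under $U(t)$, because $\varphi_t$ is a $C^1$ diffeomorphism. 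It is also dense in $H$. The core theorem (Engel--Nagel, Prop.~II.1.7) therefore shows that $C^1_c$ is a core for $L$. Hence the closure of $(\mathbf{F}\cdot\nabla,C^1_c(\mathbb{R}^n))$ equals $L$, which generates $U(t)$ with $\|U(t)\|\le e^{\lambda t}$ and $U(t)x=x(\varphi_t)$, as claimed.
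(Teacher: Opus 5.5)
Your proposal is correct and follows the paper's route: differentiate $\|x\circ\varphi_t\|^2$ for $x\in C^1_c(\mathbb{R}^n)$, integrate by parts using $\rho\mathbf{F}\in W^{1,1}(\mathbb{R}^n)$, bound the result by $2\lambda\|x\circ\varphi_t\|^2$, apply Gronwall and extend by density, with your core-theorem argument (invariance of $C^1_c(\mathbb{R}^n)$ under $U(t)$) supplying the generator identification that the paper delegates to its earlier reference. One justification needs repair: ${\rm div}(\rho\mathbf{F})=0$ a.e.\ on $\{\rho=0\}$ does not follow from the integration-by-parts identity but from Stampacchia's theorem (the gradient of a $W^{1,1}$ function vanishes a.e.\ on its zero set, applied to each $\rho F_i$), a fact the paper's proof also uses tacitly.
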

\begin{proof}
    The proof is given in appendix \ref{app:proof}.
\end{proof}

Let us consider a simple nonstationary non-Hamiltonian system: the one-dimensional damped harmonic oscillator. The vector field $\mathbf{F}$ takes the form
\begin{align*}
    \mathbf{F}(q,p) &= 
    \begin{pmatrix}
        p \\
        -q-2\gamma p  
    \end{pmatrix} \, ,
\end{align*}
where $\gamma>0$ is the damping ratio. The easiest phase space density would be a Gaussian distribution. To illustrate the scope of the results obtained, we choose the 'bump function' instead: 
\begin{align*}
    \rho(q,p) &:= \begin{cases}
        C \exp\left( \frac{1}{q^2+p^2-1}\right) &\quad q^2+p^2 < 1  \\
        0 &\quad \text{else} \, ,
    \end{cases}
\end{align*}
where $C$ is a normalization constant. In order to obtain a quasicontraction semigroup, we show that $\lambda < \infty$:
\begin{align*}
    \lambda &:= \sup_{ \{\Gamma\in\mathbb{R}^n | \rho(\Gamma)\neq 0 \}} \left(-\frac{1}{2\rho}{\rm div}(\rho\mathbf{F}) \right) 
    = \sup_{q^2+p^2<1} \left(\gamma - \frac{2\gamma p^2}{(q^2+p^2-1)^2}\right) 
    =\gamma \, .
\end{align*}
Hence, the conditions of lemma \ref{lemma:time_evolution} are fulfilled and
\begin{align*}
    L &:= \overline{(\mathbf{F}\cdot \boldsymbol{\nabla},C^1_c(\mathbb{R}^n))}
\end{align*}
generates a strongly continuous semigroup $U(t)$ with
\begin{align*}
    \|U(t)\| &\leq e^{\gamma t} 
\end{align*}
for all $t\geq 0$. 
For the Zwanzig projection operator, we choose
\begin{align*}
    Pu &:= \frac{\int  u(\cdot,p) \rho(\cdot,p) \,\dd p  }{\int \rho(\cdot,p) \,\dd p} \, .
\end{align*}

Lemma \ref{lemma:dense_subspace} now guarantees that $LQ$ as well as $L^\dagger Q$ are densely defined: Since $C^1_c(\mathbb{R}^n)\cap X$ is dense in $X=R(Q)$, $LQ$ is densely defined. Now let $u\in D(L)$ and $v \in C^1_c(\mathbb{R}^n)\cap X$. Since $L$ is the closure of $(\mathbf{F}\cdot\boldsymbol{\nabla},C^1_c(\mathbb{R}^n))$, there exists a sequence $C^1_c(\mathbb{R}^n) \ni u_n \to u$ such that $\mathbf{F}\cdot\boldsymbol{\nabla} u_n \to Lu$. Hence,
\begin{align*}
    (Lu,v) &= \lim_n (\mathbf{F}\cdot \boldsymbol{\nabla}u_n,v) = - \lim_n \left(u_n, \frac{1}{\rho} \text{div}(\rho\mathbf{F}v)\right) = - \left(u, \frac{1}{\rho} \text{div}(\rho\mathbf{F}v)\right) \, .
\end{align*}
This shows that $v\in D(L^\dagger)$ with $L^\dagger v = -\frac{1}{\rho}\text{div}(\rho\mathbf{F}v)$ for all $v\in C^1_c(\mathbb{R}^n)\cap X$. Since $C^1_c(\mathbb{R}^n)\cap X$ is dense in $X$, this shows that $L^\dagger Q$ is densely defined. 

Therefore, the assumptions of sec.~\ref{sec:application} are fulfilled, and we can immediately apply lemmata \ref{lemma:existence_2} and \ref{lemma:uniqueness_2}: Let $f\in L^2(0,T,X)$ and $g\in X$ arbitrary. There exists a weak solution $u\in L^2(0,T;X)$. There exists at most one weak solution $u$ such that $u\in H^1(0,T;X)$ and $\overline{QL\rvert_X} u \in L^2(0,T;X)$. 

\section{Discussion and conclusion}\label{sec:conclusion}
Let $X$ be a complex Hilbert space. We proved the existence of weak solutions $u\in L^2(0,T;X)$ in the spirit of Friedrichs' extension for the abstract Cauchy problem (\ref{equ:ACP2}) associated to some linear operator $A$ whose adjoint admits a finite upper logarithmic norm (Theorem \ref{theorem:existence}). The a priori growth bounds and the uniqueness of these solutions, on the other hand, appear to be more subtle. If the operator $A$ admits a finite upper logarithmic norm, the growth bound as well as the uniqueness of a weak solution $u$ follows, provided that  $u$ is sufficiently regular, i.e. $u\in H^1(0,T;X)$ and $Au \in L^2(0,T;X)$ (sec.\ref{sec:uniqueness}). The existence of sufficiently regular solutions, however, remains unclear. Hence, the existence of \textit{unique} weak solutions remains unsolved (sec.~\ref{sec:outlook}). Furthermore, it can be challenging to obtain estimates on the logarithmic norms for $A$ and $A^\dagger$ simultaneously. This becomes apparent when applying the results obtained to the projection operator formalism (sec.~\ref{sec:application}). 

Let $H$ be a complex Hilbert space and let $U(t)$ be a strongly continuous quasicontraction semigroup with generator $L$. Let $Q$ be an orthogonal projection and let $A:=\overline{QL}\big\rvert_X$ be the part of the closure of $QL$ in $X$, where $X$ is the range of $Q$. Then there exists a weak solution to the ACP (\ref{equ:ACP2}), if $L$ as well as $L^\dagger$ are densely defined on the range of $Q$ (Lemma \ref{lemma:existence_2}). In essence, this result follows from the fact that $A^\dagger$ has a finite logarithmic norm under the given assumptions. However, it can make a difference, if we first take the closure of $QL$ and then take the part in $X$, or the other way around. There are cases, where $\overline{QL}\big\rvert_X$ is a proper extension of $\overline{QL\rvert_X}$, cf. the counterexample given by Givon et al. \cite{givon2005}, which is also discussed in \cite[Example 3.10.]{widder2025}. Consequently, we are not able to show that the logarithmic norm of $A$ is finite. Instead, we showed that the logarithmic norm of $\overline{QL\rvert_X}$ is finite. With this, the growth bounds as well as the uniqueness of a weak solution $u$ follow, if $u\in H^1(0,T;X)$ and $\overline{QL\rvert_X} u \in L^2(0,T;X)$ (Lemma \ref{lemma:uniqueness_2}). Hence, we had to exclude functions in the domain of $\overline{QL}\big\rvert_X$ that are not contained in the domain of $\overline{QL\rvert_X}$. 

Finally, we considered the Zwanzig projection operator $P$ on the space of square integrable random variables $H=L^2_\rho(\mathbb{R}^n)$ defined by eq.~(\ref{equ:zwanzig}). In order to apply the results obtained, it is necessary to show that $L$ as well as $L^\dagger$ are densely defined on the range of $Q:=1-P$. To this end, we showed that there exists a dense set of compactly supported and continuously differentiable functions in the range of $Q$, provided the initial phase space density $\rho$ is continuously differentiable (Lemma \ref{lemma:dense_subspace}). As an example, we applied our results to the damped harmonic oscillator using the bump function as initial distribution (sec.~\ref{sec:example}).

The main contribution of this article is the generalization of the existence proof for weak solutions of the orthogonal dynamics equation by Givon et al. \cite{givon2005}. The main ideas of the proofs remain analogous, but the requirement that $QL$ is skew-symmetric on the range of $Q$ is relaxed. This enables us to proof the existence of weak solutions to the orthogonal dynamics equation for a large class of nonstationary non-Hamiltonian systems and any orthogonal projection $Q$ with the property that the range of $Q$ contains a dense set of compactly supported and continuously differentiable functions. 

\begin{acknowledgments}
     The author would like to thank Tanja Schilling and Fabian Koch for helpful discussions. The author acknowledges funding by the German Research Foundation (DFG) in Project 535083866. 
\end{acknowledgments}

\appendix

\section{Integration by parts}\label{app:integration_by_parts}
Let $u,v \in H^1(0,T;X)$, where $X$ is a Hilbert space. Then, the functions $u$ and $v$ are absolutely continuous. Hence, the map $t\to (u(t),v(t))$ is absolutely continuous. Thus, $t\to (u(t),v(t))$ is weakly differentiable and its weak derivative coincides with the pointwise derivative almost everywhere. The pointwise derivative can be evaluated using the product rule almost everywhere, because $u$ and $v$ are weakly differentiable and therefore strongly differentiable almost everywhere. Hence, by the fundamental theorem of Lebesgue integral calculus, we have
\begin{align*}
     (u(T),v(T)) &= (u(0),v(0))+\int^T_0 (u'(t),v(t))+(u(t),v'(t)) \, \dd t \, .\
\end{align*}

\section{Proof of Lemma \ref{lemma:time_evolution}}\label{app:proof}
\begin{proof}
    For all $x\in C^1_c(\mathbb{R}^n)$, integration by parts yields \cite[Eq.~B.2]{widder2025}
    \begin{align*}
        \frac{\dd}{\dd t}\|x(\varphi_t)\|^2 &= - \left(\frac{1}{\rho}\text{div}(\rho\mathbf{F})x(\varphi_t),x(\varphi_t)\right) \leq 2\lambda \|x(\varphi_t)\|^2 \, ,
    \end{align*}
    where we used the monotonicity of the integral in the second step. Due to Gronwall's inequality, this implies 
    \begin{align*}
        \|x(\varphi_t)\| &\leq e^{\lambda t} \|x\| \, ,
    \end{align*}
    for all $t\geq0$. 
    
    Since $C^1_c(\mathbb{R}^n)$ is dense in $L^2_\rho(\mathbb{R}^n)$, the map $x\to x(\varphi_t)$ extends to a strongly continuous semigroup $U(t)$ on $L^2_\rho(\mathbb{R}^n)$ and the generator is identified with the closure of $(\mathbf{F}\cdot\boldsymbol{\nabla},C^1_c(\mathbb{R}^n))$ in $L^2_\rho(\mathbb{R}^n)$. To show this, it suffices to repeat the line of argument in the proof of \cite[Theorem~4.1]{widder2025}.
\end{proof}


\bibliographystyle{unsrt}
\bibliography{refs}

@article{li2015,
    author = {Li, Zhen and Bian, Xin and Li, Xiantao and Karniadakis, George Em},
    title = {Incorporation of memory effects in coarse-grained modeling via the {Mori}-{Zwanzig} formalism},
    journal = {The Journal of Chemical Physics},
    volume = {143},
    number = {24},
    pages = {243128},
    year = {2015},
    month = {11},
    issn = {0021-9606},
    doi = {10.1063/1.4935490},
    url = {https://doi.org/10.1063/1.4935490},
    eprint = {https://pubs.aip.org/aip/jcp/article-pdf/doi/10.1063/1.4935490/15508956/243128_1_online.pdf},
}

@article{li2010,
author = {Li, Xiantao},
title = {A coarse-grained molecular dynamics model for crystalline solids},
journal = {International Journal for Numerical Methods in Engineering},
volume = {83},
number = {8-9},
pages = {986-997},
doi = {https://doi.org/10.1002/nme.2892},
url = {https://onlinelibrary.wiley.com/doi/abs/10.1002/nme.2892},
eprint = {https://onlinelibrary.wiley.com/doi/pdf/10.1002/nme.2892},
year = {2010}
}

@book{hansen2013,
  title={Theory of Simple Liquids: with Applications to Soft Matter},
  author={Hansen, J.P. and McDonald, I.R.},
  isbn={9780123870339},
  lccn={2013372077},
  year={2013},
  publisher={Academic Press}
}

@article{ahmadirahmat2025,
  title = {Mode-coupling theory of the glass transition for a liquid in a periodic potential},
  author = {Ahmadirahmat, Abolfazl and Caraglio, Michele and Krakoviack, Vincent and Franosch, Thomas},
  journal = {Phys. Rev. E},
  volume = {112},
  issue = {1},
  pages = {015405},
  numpages = {15},
  year = {2025},
  month = {Jul},
  publisher = {American Physical Society},
  doi = {10.1103/ks5t-xtvd},
  url = {https://link.aps.org/doi/10.1103/ks5t-xtvd}
}

@article{vrugt2019,
  title = {{Mori}-{Zwanzig} projection operator formalism for far-from-equilibrium systems with time-dependent {Hamiltonians}},
  author = {te Vrugt, Michael and Wittkowski, Raphael},
  journal = {Phys. Rev. E},
  volume = {99},
  issue = {6},
  pages = {062118},
  numpages = {26},
  year = {2019},
  month = {Jun},
  publisher = {American Physical Society},
  doi = {10.1103/PhysRevE.99.062118},
  url = {https://link.aps.org/doi/10.1103/PhysRevE.99.062118}
}

@article{nakajima1958,
    author = {Nakajima, Sadao},
    title = {On Quantum Theory of Transport Phenomena: Steady Diffusion},
    journal = {Progress of Theoretical Physics},
    volume = {20},
    number = {6},
    pages = {948-959},
    year = {1958},
    month = {12},
    issn = {0033-068X},
    doi = {10.1143/PTP.20.948},
    url = {https://doi.org/10.1143/PTP.20.948},
    eprint = {https://academic.oup.com/ptp/article-pdf/20/6/948/5440766/20-6-948.pdf},
}

@article{koch2024,
  title = {Nonequilibrium solvent response force: What happens if you push a {Brownian} particle},
  author = {Koch, Fabian and Erle, Jona and Schilling, Tanja},
  journal = {Phys. Rev. Res.},
  volume = {6},
  issue = {1},
  pages = {L012032},
  numpages = {5},
  year = {2024},
  month = {Feb},
  publisher = {American Physical Society},
  doi = {10.1103/PhysRevResearch.6.L012032},
  url = {https://link.aps.org/doi/10.1103/PhysRevResearch.6.L012032}
}

@article{meyer2017,
    author = {Meyer, Hugues and Voigtmann, Thomas and Schilling, Tanja},
    title = {On the non-stationary generalized {Langevin} equation},
    journal = {The Journal of Chemical Physics},
    volume = {147},
    number = {21},
    pages = {214110},
    year = {2017},
    month = {12},
    issn = {0021-9606},
    doi = {10.1063/1.5006980},
    url = {https://doi.org/10.1063/1.5006980},
    eprint = {https://pubs.aip.org/aip/jcp/article-pdf/doi/10.1063/1.5006980/15535580/214110_1_online.pdf},
}

@article{furukawa1979,
    author = {Furukawa, Hiroshi},
    title = {Study of {Langevin} Type Equations by Means of a New Projection Operator Method in Nonequilibrium States},
    journal = {Progress of Theoretical Physics},
    volume = {62},
    number = {1},
    pages = {70-90},
    year = {1979},
    month = {07},
    issn = {0033-068X},
    doi = {10.1143/PTP.62.70},
    url = {https://doi.org/10.1143/PTP.62.70},
    eprint = {https://academic.oup.com/ptp/article-pdf/62/1/70/5464160/62-1-70.pdf},
}

@article{xing2011,
    author = {Xing, Jianhua and Kim, K. S.},
    title = {Application of the projection operator formalism to non-{Hamiltonian} dynamics},
    journal = {The Journal of Chemical Physics},
    volume = {134},
    number = {4},
    pages = {044132},
    year = {2011},
    month = {01},
    issn = {0021-9606},
    doi = {10.1063/1.3530071},
    url = {https://doi.org/10.1063/1.3530071},
    eprint = {https://pubs.aip.org/aip/jcp/article-pdf/doi/10.1063/1.3530071/13788369/044132_1_online.pdf},
}

@article{zwanzig1960,
    author = {Zwanzig, Robert},
    title = {Ensemble Method in the Theory of Irreversibility},
    journal = {The Journal of Chemical Physics},
    volume = {33},
    number = {5},
    pages = {1338-1341},
    year = {1960},
    month = {11},
    issn = {0021-9606},
    doi = {10.1063/1.1731409},
    url = {https://doi.org/10.1063/1.1731409},
    eprint = {https://pubs.aip.org/aip/jcp/article-pdf/33/5/1338/18820045/1338_1_online.pdf},
}

@article{izvekov2019,
    author = {Izvekov, Sergei},
    title = {Microscopic derivation of coarse-grained, energy-conserving generalized {Langevin} dynamics},
    journal = {The Journal of Chemical Physics},
    volume = {151},
    number = {10},
    pages = {104109},
    year = {2019},
    month = {09},
    issn = {0021-9606},
    doi = {10.1063/1.5096655},
    url = {https://doi.org/10.1063/1.5096655},
    eprint = {https://pubs.aip.org/aip/jcp/article-pdf/doi/10.1063/1.5096655/13601833/104109_1_online.pdf},
}

@article{grabert1977,
author={Grabert, H.},
title={Microdynamics and equations of motion for macrovariables},
journal={Zeitschrift f{\"u}r Physik B Condensed Matter},
year={1977},
month={Mar},
day={01},
volume={27},
number={1},
pages={95-99},
issn={1431-584X},
doi={10.1007/BF01315510},
url={https://doi.org/10.1007/BF01315510}
}

@article{darve2009,
author = {Eric Darve  and Jose Solomon  and Amirali Kia },
title = {Computing generalized {Langevin} equations and generalized {Fokker}–{Planck} equations},
journal = {Proceedings of the National Academy of Sciences},
volume = {106},
number = {27},
pages = {10884-10889},
year = {2009},
doi = {10.1073/pnas.0902633106},
URL = {https://www.pnas.org/doi/abs/10.1073/pnas.0902633106},
eprint = {https://www.pnas.org/doi/pdf/10.1073/pnas.0902633106}}

@article{ayaz2022,
  title = {Generalized {Langevin} equation with a nonlinear potential of mean force and nonlinear memory friction from a hybrid projection scheme},
  author = {Ayaz, Cihan and Scalfi, Laura and Dalton, Benjamin A. and Netz, Roland R.},
  journal = {Phys. Rev. E},
  volume = {105},
  issue = {5},
  pages = {054138},
  numpages = {19},
  year = {2022},
  month = {May},
  publisher = {American Physical Society},
  doi = {10.1103/PhysRevE.105.054138},
  url = {https://link.aps.org/doi/10.1103/PhysRevE.105.054138}
}

@article{vroylandt2022,
doi = {10.1209/0295-5075/acab7d},
url = {https://doi.org/10.1209/0295-5075/acab7d},
year = {2022},
month = {dec},
publisher = {EDP Sciences, IOP Publishing and Società Italiana di Fisica},
volume = {140},
number = {6},
pages = {62003},
author = {Vroylandt, Hadrien},
title = {On the derivation of the generalized {Langevin} equation and the fluctuation-dissipation theorem},
journal = {Europhysics Letters}
}

@article{hery2024,
doi = {10.1088/1751-8121/ad91ff},
url = {https://doi.org/10.1088/1751-8121/ad91ff},
year = {2024},
month = {nov},
publisher = {IOP Publishing},
volume = {57},
number = {50},
pages = {505003},
author = {Héry, Benjamin J A and Netz, Roland R},
title = {Derivation of a generalized {Langevin} equation from a generic time-dependent {Hamiltonian}},
journal = {Journal of Physics A: Mathematical and Theoretical}
}

@article{izvekov2025,
  title = {{Mori}-{Zwanzig} projection operator formalism: Generalized {Langevin} equation dynamics of a classical system perturbed by an external generalized potential and far from equilibrium},
  author = {Izvekov, Sergei},
  journal = {Phys. Rev. E},
  volume = {111},
  issue = {3},
  pages = {034130},
  numpages = {12},
  year = {2025},
  month = {Mar},
  publisher = {American Physical Society},
  doi = {10.1103/PhysRevE.111.034130},
  url = {https://link.aps.org/doi/10.1103/PhysRevE.111.034130}
}

@article{kawai2011,
    author = {Kawai, Shinnosuke and Komatsuzaki, Tamiki},
    title = {Derivation of the generalized {Langevin} equation in nonstationary environments},
    journal = {The Journal of Chemical Physics},
    volume = {134},
    number = {11},
    pages = {114523},
    year = {2011},
    month = {03},
    issn = {0021-9606},
    doi = {10.1063/1.3561065},
    url = {https://doi.org/10.1063/1.3561065},
    eprint = {https://pubs.aip.org/aip/jcp/article-pdf/doi/10.1063/1.3561065/13524774/114523_1_online.pdf},
}

@article{netz2024,
  title = {Derivation of the nonequilibrium generalized {Langevin} equation from a time-dependent many-body {Hamiltonian}},
  author = {Netz, Roland R.},
  journal = {Phys. Rev. E},
  volume = {110},
  issue = {1},
  pages = {014123},
  numpages = {22},
  year = {2024},
  month = {Jul},
  publisher = {American Physical Society},
  doi = {10.1103/PhysRevE.110.014123},
  url = {https://link.aps.org/doi/10.1103/PhysRevE.110.014123}
}

@article{widder_addendum,
doi = {10.1088/1751-8121/ae7168},
url = {https://doi.org/10.1088/1751-8121/ae7168},
year = {2026},
month = {jun},
publisher = {IOP Publishing},
volume = {59},
number = {22},
pages = {229401},
author = {Widder, Christoph and Zimmer, Johannes and Schilling, Tanja},
title = {ADDENDUM: On the generalized {Langevin} equation and the {Mori} projection operator technique (2025 {J. Phys. A: Math. Theor}. 58 405001)},
journal = {Journal of Physics A: Mathematical and Theoretical}
}

@misc{widder2026,
      title={Generalised {Langevin} Dynamics: Significance and Limitations of the Projection Operator Formalism}, 
      author={Christoph Widder and Tanja Schilling},
      year={2026},
      eprint={2604.20453},
      archivePrefix={arXiv},
      primaryClass={math-ph},
      url={https://arxiv.org/abs/2604.20453}, 
}

@book{cohn,
  title={Measure Theory},
  author={Cohn, D. L.},
  doi={https://doi.org/10.1007/978-1-4614-6956-8},
  url={https://link.springer.com/book/10.1007/978-1-4614-6956-8},
  year={2013},
  publisher={Birkhäuser New York, NY}
}

@book{folland,
  author = {Folland, Gerald B.},
  publisher = {John Whiley and Sons},
  title = {Real Analysis - Modern Techniques and Their Applications - Second Edition},
  year = 1999
}

@book{vanNeerven1992,
author="van Neerven, Jan",
title="The Adjoint of a Semigroup of Linear Operators",
year="1992",
publisher="Springer Berlin Heidelberg",
address="Berlin, Heidelberg",
isbn="978-3-540-47497-5",
doi="10.1007/BFb0085008",
url="https://doi.org/10.1007/BFb0085008"
}

@book{rudin1974,
  title={Functional Analysis},
  author={Rudin, W.},
  edition={TMH EDITION},
  year={1974},
  publisher={McGraw-Hill}
}

@book{evans,
  title={Partial Differential Equations},
  author={Evans, L.C.},
  isbn={9780821849743},
  lccn={2009044716},
  series={Graduate studies in mathematics},
  year={2010},
  publisher={American Mathematical Society}
}

@book{Brezis2011,
author="Brezis, Haim",
title="Functional Analysis, {Sobolev} Spaces and Partial Differential Equations",
year="2011",
publisher="Springer New York",
address="New York, NY",
isbn="978-0-387-70914-7",
doi="10.1007/978-0-387-70914-7",
url="https://doi.org/10.1007/978-0-387-70914-7"
}

@article{davies,
 ISSN = {03794024, 18417744},
 URL = {http://www.jstor.org/stable/24715654},
 author = {E. B. Davies},
 journal = {Journal of Operator Theory},
 number = {2},
 pages = {225--249},
 publisher = {Theta Foundation},
 title = {SEMIGROUP GROWTH BOUNDS},
 urldate = {2024-08-27},
 volume = {53},
 year = {2005}
}

@article{kawasaki73,
    doi = {10.1088/0305-4470/6/9/004},
    url = {https://dx.doi.org/10.1088/0305-4470/6/9/004},
    year = {1973},
    month = {sep},
    publisher = {},
    volume = {6},
    number = {9},
    pages = {1289},
    author = {K Kawasaki},
    title = {Simple derivations of generalized linear and nonlinear {Langevin} equations},
    journal = {Journal of Physics A: Mathematical, Nuclear and General}
}

@article{holian85,
    author = {Holian, B. L. and Evans, D. J.},
    title = {Classical response theory in the {Heisenberg} picture},
    journal = {The Journal of Chemical Physics},
    volume = {83},
    number = {7},
    pages = {3560-3566},
    year = {1985},
    month = {10},
    issn = {0021-9606},
    doi = {10.1063/1.449161},
    url = {https://doi.org/10.1063/1.449161},
    eprint = {https://pubs.aip.org/aip/jcp/article-pdf/83/7/3560/18954656/3560\_1\_online.pdf},
}

@article{zhu18,
    author = {Zhu, Yuanran and Dominy, Jason M. and Venturi, Daniele},
    title = "{On the estimation of the {Mori-Zwanzig} memory integral}",
    journal = {Journal of Mathematical Physics},
    volume = {59},
    number = {10},
    pages = {103501},
    year = {2018},
    month = {09},
    issn = {0022-2488},
    doi = {10.1063/1.5003467},
    url = {https://doi.org/10.1063/1.5003467}
}

@article{berkowitz1983,
    author = {Berkowitz, M. and Morgan, J. D. and McCammon, J. Andrew},
    title = {Generalized {Langevin} dynamics simulations with arbitrary time‐dependent memory kernels},
    journal = {The Journal of Chemical Physics},
    volume = {78},
    number = {6},
    pages = {3256-3261},
    year = {1983},
    month = {03},
    issn = {0021-9606},
    doi = {10.1063/1.445244},
    url = {https://doi.org/10.1063/1.445244}
}

@article{widder2025,
doi = {10.1088/1751-8121/ae02cc},
url = {https://doi.org/10.1088/1751-8121/ae02cc},
year = {2025},
month = {sep},
publisher = {IOP Publishing},
volume = {58},
number = {40},
pages = {405001},
author = {Widder, Christoph and Zimmer, Johannes and Schilling, Tanja},
title = {On the generalized {Langevin} equation and the {Mori} projection operator technique},
journal = {Journal of Physics A: Mathematical and Theoretical}
}

@article{grimmer1985,
title = {On linear {Volterra} equations in {Banach} spaces},
journal = {Computers \& Mathematics with Applications},
volume = {11},
number = {1},
pages = {189-205},
year = {1985},
issn = {0898-1221},
doi = {https://doi.org/10.1016/0898-1221(85)90146-4},
url = {https://www.sciencedirect.com/science/article/pii/0898122185901464},
author = {R. Grimmer and J. Prüss}
}

@article{miller1975,
  title={{Volterra} integral equations in a {Banach} space},
  author={Miller, K Richard},
  journal={Funkcialaj Ekvacioj},
  volume={18},
  number={2},
  pages={163--193},
  year={1975},
  publisher={Department of Mathematics, Graduate School of Science, Kobe University}
}

@article{arendt1987,
author={Arendt, Wolfgang},
title={Vector-valued {Laplace} transforms and {Cauchy} problems},
journal={Israel Journal of Mathematics},
year={1987},
month={Oct},
day={01},
volume={59},
number={3},
pages={327-352},
issn={1565-8511},
doi={10.1007/BF02774144},
url={https://doi.org/10.1007/BF02774144}
}

@article{friedrichs1954,
author = {Friedrichs, K. O.},
title = {Symmetric hyperbolic linear differential equations},
journal = {Communications on Pure and Applied Mathematics},
volume = {7},
number = {2},
pages = {345-392},
doi = {https://doi.org/10.1002/cpa.3160070206},
url = {https://onlinelibrary.wiley.com/doi/abs/10.1002/cpa.3160070206},
eprint = {https://onlinelibrary.wiley.com/doi/pdf/10.1002/cpa.3160070206},
year = {1954}
}

@book{reed1980,
  title={I: Functional Analysis},
  author={Reed, M. and Simon, B.},
  series={Methods of Modern Mathematical Physics},
  year={1980},
  publisher={Academic Press},
  edition={Revised and Enlarged Edition}
}

@Article{givon2005,
author={Givon, Dror
and Kupferman, Raz
and Hald, Ole H.},
title={Existence proof for orthogonal dynamics and the {Mori-Zwanzig} formalism},
journal={Israel Journal of Mathematics},
year={2005},
month={Dec},
day={01},
volume={145},
number={1},
pages={221-241},
issn={1565-8511},
doi={10.1007/BF02786691},
url={https://doi.org/10.1007/BF02786691}
}

@book{engel,
  title={One-Parameter Semigroups for Linear Evolution Equations},
  author={Engel, Klaus-Jochen and Nagel, Rainer},
  series={Graduate Texts in Mathematics},
  year={2000},
  publisher={Springer New York},
  doi={https://doi.org/10.1007/b97696},
  note = {eBook 06 April 2006},
  eBook-ISBN= {978-0-387-22642-2}
}

@book{grabert2006projection,
  title={Projection operator techniques in nonequilibrium statistical mechanics},
  author={Grabert, Hermann},
  volume={95},
  year={2006},
  publisher={Springer}
}

@article{zwanzig1961,
    author = {Zwanzig, Robert},
    title = {Memory Effects in Irreversible Thermodynamics},
    journal = {Phys. Rev.},
    volume = {124},
    issue = {4},
    pages = {983--992},
    numpages = {0},
    year = {1961},
    month = {Nov},
    publisher = {American Physical Society},
    doi = {10.1103/PhysRev.124.983},
    url = {https://link.aps.org/doi/10.1103/PhysRev.124.983}
}

@article{mori1965transport,
  title={Transport, collective motion, and {Brownian} motion},
  author={Mori, Hazime},
  journal={Progress of theoretical physics},
  volume={33},
  number={3},
  pages={423--455},
  year={1965},
  publisher={Oxford University Press}
}

@book{zwanzig2001nonequilibrium,
  title={Nonequilibrium Statistical Mechanics},
  author= {Zwanzig, Robert},
  year={2001},
  publisher={Oxford University Press}
}

@book{snook2006langevin,
  title={The {Langevin} and generalised {Langevin }approach to the dynamics of atomic, polymeric and colloidal systems},
  author={Snook, Ian},
  year={2006},
  publisher={Elsevier}
}

@article{schilling2022coarse,
  title={Coarse-grained modelling out of equilibrium},
  author={Schilling, Tanja},
  journal={Physics Reports},
  volume={972},
  pages={1--45},
  year={2022},
  publisher={Elsevier}
}

@article{chorin2000optimal,
  title={Optimal prediction and the {Mori}--{Zwanzig} representation of irreversible processes},
  author={Chorin, Alexandre J and Hald, Ole H and Kupferman, Raz},
  journal={Proceedings of the National Academy of Sciences},
  volume={97},
  number={7},
  pages={2968--2973},
  year={2000},
  publisher={The National Academy of Sciences}
}

@book{gotze2009complex,
  title={Complex dynamics of glass-forming liquids: A mode-coupling theory},
  author={G{\"o}tze, Wolfgang},
  volume={143},
  year={2009},
  publisher={Oxford University Press}
}

@book{breuer2002theory,
  title={The theory of open quantum systems},
  author={Breuer, Heinz-Peter and Petruccione, Francesco},
  year={2002},
  publisher={OUP Oxford}
}

@article{espanol2009derivation,
  title={Derivation of dynamical density functional theory using the projection operator technique},
  author={Espanol, Pep and L{\"o}wen, Hartmut},
  journal={The Journal of chemical physics},
  volume={131},
  number={24},
  year={2009},
  publisher={AIP Publishing}
}

@article{seegebrecht2025concept,
  title={The Concept of Minimal Dissipation and the Identification of Work in Autonomous Systems: a View from Classical Statistical Physics},
  author={Seegebrecht, Anja and Schilling, Tanja},
  journal={Journal of Statistical Physics},
  volume={192},
  number={10},
  pages={133},
  year={2025},
  publisher={Springer}
}

@article{klippenstein2021introducing,
  title={Introducing memory in coarse-grained molecular simulations},
  author={Klippenstein, Viktor and Tripathy, Madhusmita and Jung, Gerhard and Schmid, Friederike and Van Der Vegt, Nico FA},
  journal={The Journal of Physical Chemistry B},
  volume={125},
  number={19},
  pages={4931--4954},
  year={2021},
  publisher={ACS Publications}
}

\end{document}